\pdfoutput=1

\documentclass[aps,reprint]{revtex4-1}

\usepackage{graphicx}
\usepackage{dcolumn}
\usepackage{natbib,aas_macros}
\usepackage{color}
\usepackage{amssymb}
\usepackage{amsthm}
\usepackage{mathrsfs}
\usepackage{amsmath}
\usepackage{hyperref}
\usepackage{lineno}
\usepackage{tikz}
\usepackage[normalem]{ulem}

\usepackage{comment}

\usepackage{soul}

\usetikzlibrary{external}
\usetikzlibrary{positioning}
\tikzexternalize[prefix=tikz/]

\newcommand{\mb}{\mathbf}
\newcommand{\bb}{\mathbb}
\newcommand{\innerprod}[2]{\left\langle #1, #2 \right\rangle}

\newcommand{\R}{\bb R}
\newcommand{\E}{\bb E}
\newcommand{\prob}{\bb P}
\newcommand{\ST}{\mathrm{subject\ to}}
\renewcommand{\mathbf}{\boldsymbol}
\newcommand{\mr}{\mathrm}
\newcommand{\mc}{\mathcal}

\newcommand{\fnr}{\mathrm{FNR}}
\newcommand{\fpr}{\mathrm{FPR}}
\newcommand{\wh}{\widehat}

\newcommand{\rhonoise}{\rho_{\mr{noise}}}

\newtheorem{theorem}{Theorem}%[section]
\newtheorem{definition}{Definition}%[section]

\newtheorem{proposition}[theorem]{Proposition}

\newcommand{\shallownet}{{\tt MNet-Shallow}}
\newcommand{\deepnet}{{\tt MNet-Deep}}

\newcommand{\wfnr}{\mathrm{WFNR}}

\graphicspath{{./figs/}}

\begin{document} 
\title{Generalized Approach to Matched Filtering using Neural Networks}

% These commands specify the author list and affiliations

\author{Jingkai Yan$^{1,4}$, Mariam Avagyan$^{1,4}$, Robert E. Colgan$^{2,4}$, Do\u{g}a Veske$^{3}$, Imre Bartos$^{6}$, John Wright$^{1,4}$, Zsuzsa M\'arka$^{4,5}$, and Szabolcs M\'arka$^{3,4}$\vspace{.1in}}

\affiliation{$^1$Department of Electrical Engineering, Columbia University in the City of New York, 500 W. 120th St., New York, NY 10027, USA\\
$^2$Department of Computer Science, Columbia University in the City of New York, 500 W. 120th St., New York, NY 10027, USA\\
$^3$Department of Physics, Columbia University in the City of New York, 538 W. 120th St., New York, NY 10027, USA\\
$^4$Data Science Institute, Columbia University in the City of New York, 550 W. 120th St., New York, NY 10027, USA\\
$^5$Columbia Astrophysics Laboratory, Columbia University in the City of New York, 538 W. 120th St., New York, NY 10027, USA\\
$^6${Department of Physics, University of Florida, PO Box 118440, Gainesville, FL 32611-8440, USA}
}
%\date{}

\begin{abstract}

Gravitational wave science is a pioneering field with rapidly evolving data analysis methodology currently assimilating and inventing deep learning techniques. The bulk of the sophisticated flagship searches of the field rely on the time-tested matched filtering principle within their core. In this paper, we make a key observation on the relationship between the emerging deep learning and the traditional techniques: \emph{matched filtering is formally equivalent to a particular neural network}. This means that a neural network can be constructed analytically to exactly implement matched filtering, and can be further trained on data or boosted with additional complexity for improved performance. Moreover, we show that the proposed neural network architecture can outperform matched filtering, both with or without knowledge of a prior on the parameter distribution. When a prior is given, the proposed neural network can approach the statistically optimal performance. We also propose and investigate two different neural network architectures \shallownet\ and \deepnet, both of which implement matched filtering at initialization and can be trained on data. \shallownet\ has simpler structure, while \deepnet\ is more flexible and can deal with a wider range of distributions. Our theoretical findings are corroborated by experiments using real LIGO data and synthetic injections, where our proposed methods significantly outperform matched filtering at false positive rates above $5\times 10^{-3}\%$. The fundamental equivalence between matched filtering and neural networks allows us to define a ``complexity standard candle'' to characterize the relative complexity of the different approaches to gravitational wave signal searches in a common framework. Additionally it also provides a glimpse of an intriguing symmetry that could provide clues on interpretability, namely how neural networks approach the problem of finding signals in overwhelming noise. Finally, our results suggest new perspectives on the role of deep learning in gravitational wave detection.

\end{abstract}
\pacs{}
\maketitle

% -------------------------------------------
\section{Introduction}
\label{sec:intro}
% -------------------------------------------

The discovery of cosmic gravitational waves~\cite{PhysRevLett.116.061102}, the windfall of binary black-hole (BBH) merger detections~\cite{Abbott_2019,2020arXiv201014527A}, and the spectacular insights that multimessenger astrophysics provided~\cite{PhysRevLett.119.161101,2017ApJ...848L..12A} revolutionized how we understand the Universe. This leap was due to multiple factors, from instrumental advances to computing breakthroughs. Emerging interferometric gravitational wave detectors, KAGRA~\cite{10.1093/ptep/ptaa125}, GEO600~\cite{Dooley_2016}, Virgo~\cite{2015CQGra..32b4001A}, and LIGO~\cite{1992Sci...256..325A,2015CQGra..32g4001L}, played a critical role as they provided the technology~\cite{Affeldt_2014,PhysRevLett.123.231108,PhysRevLett.123.231107} enabling signals to be extracted from ripples in Einstein's space-time~\cite{1916SPAW.......688E,1918SPAW.......154E}. Of course, as it is not sufficient to have data with faint cosmic signals buried in the noise, the community had to rely on exquisitely sensitive data analysis algorithms to extract transient signals from the noisy data. The bulk of the discoveries were made by two classes of powerful data analysis approaches, 
\emph{excess power}~\cite{2001PhRvD..63d2003A,2004CQGra..21S1819K,2000IJMPD...9..303A} and 
\emph{matched filtering}~\cite{1989thyg.book.....H,PhysRevD.60.022002,1993PhRvL..70.2984C,1994PhRvD..49.2658C,1998PhRvD..57.4535F,1998PhRvD..57.4566F,2004PhRvD..69l2001A}.
The flagship matched filtering methods~\cite{PyCBCSoft,2012PhRvD..85l2006A,2019PASP..131b4503B,2005PhRvD..71f2001A,2014PhRvD..90h2004D,2016CQGra..33u5004U,2018PhRvD..98b4050N,2017PhRvD..95d2001M,2019arXiv190108580S,2020PhRvD.101b2003H,76e97cf9801544919973534ed7028b6a,2016CQGra..33q5012A, nitz2017detecting} reached unprecedented sophistication and became the workhorse of the field~\cite{Abbott_2019,2020arXiv201014527A}. 
Insightful work also exist on the extent of optimality, role of intrinsic parameters, and effect of non-Gaussian backgrounds ~\cite{2008arXiv0804.1161S,2012PhRvD..85l2008B,2014PhRvD..89f2002D}.
There is more than historical evidence on their algorithmic power~\cite{1999PhRvD..60b2002O}, and they are also considered optimal~\cite{1994PhRvD..49.2658C} when searching for chirps of known shape~\cite{1996PhRvD..53.6749O,PhysRevD.60.022002, 1991PhRvD..44.3819S,1994PhRvD..49.1707D} embedded in well-behaved Gaussian noise. Within the optimality and success lie limitations, as the data is significantly more complex~\cite{2016CQGra..33m4001A,2020CQGra..37e5002A} than Gaussian noise and many cosmic signals are not as well known as the BBH models that are being used in searches~\cite{2020arXiv200905461G}. Therefore, it is critical that we both seek data analysis methods beyond the horizon of current techniques and rigorously understand the place of current techniques in the broader field of possible methods.

An abundance of prior works has been using deep learning methods for gravitational wave detection. Convolutional neural networks have been shown to be capable of identifying gravitational waves and their parameters from binary black holes and binary neutron stars, with performance approaching the matched filtering search currently used by LIGO, Virgo and KARGA \cite{gebhard2017convwave, george2018deep, gabbard2018matching, george2018deepneural, fan2019applying, morawski2020convolutional, dreissigacker2019deep, krastev2020real, lin2020binary, lin2020detection, bresten2019detection, astone2018new, yamamoto2020use, dreissigacker2020deep, corizzo2020scalable, miller2019effective, bayley2020robust, krastev2020detection, luo2020extraction, santos2020gravitational, chan2020detection, xia2021improved, 2015CQGra..32b4001A, 2015CQGra..32g4001L}. 
In addition, these machine learning (ML) method can also be applied to glitches and noise transients identification \cite{biswas2013application, george2017deep, mukund2017transient, razzano2018image, fan2019applying, coughlin2019classifying, colgan2020efficient}, signal classification and parameter estimation \cite{nakano2019comparison, green2020gravitational, marulanda2020deep, caramete2020characterization, delaunoy2020lightning}, data denoising \cite{shen2019denoising, wei2020gravitational}, etc. While these works exhibit neural networks that could approach the performance of matched filtering, they are still often applied as or considered ``black box'' models. This makes it challenging to evaluate the statistical evidence provided by neural networks, and to incorporate that evidence in downstream analyses \cite{gebhard2019convolutional}.

This paper is motivated by a critical observation, which we substantiate below: {\em matched filtering with a collection of templates is formally equivalent to a particular neural network}, whose architecture and parameters are dictated by the templates.
This observation has precedents in the machine learning literature, where deep neural networks are sometimes viewed as hierarchical template matching methods, with signal-dependent, class-specific templates \cite{balestriero2018mad,cheng2019qatm,bower1988neural,1085953,buniatyan2017deep,xue1992neural,lippmann1989adaptive}. Here, we delineate a simple and explicit equivalence between matched filtering and particular neural networks, which can be constructed analytically from a set of templates. This equivalence lies in the algorithmic level, and does not depend on specific problem formulations. 

In order to study the potential performance gains of using neural networks, we formulate the gravitational wave detection problem abstractly as the detection of a parametric family of signals. Under this framework, we show that the analytically constructed networks can also be used as a principled starting point for learning from data, yielding signal classifiers with better performance than their initialization, namely ``standing on the shoulder of giants''. Such learning can be applied to scenarios both with or without a prior distribution on the parameters. In particular, when a prior distribution is given, we show that the learned neural network can (empirically) approach the statistically optimal performance. 

We propose and investigate two different neural network architectures for implementing matched filtering, respectively {\shallownet} and {\deepnet}. The former has simpler structure, while the latter is more flexible and can deal with a wider range of distributions. These learned classifiers have a number of additional advantages: they do not require prior knowledge of the noise distribution, can be adapted to cope with time-varying noise distributions, and suggest new approaches to computationally efficient signal detection. 
We conducted experiments using real LIGO data~\cite{ABBOTT2021100658} in order to demonstrate the feasibility and power of neural networks in comparison to matched filtering, where we validate our findings empirically  that neural networks via training can reach better performance.
Finally, interpreting matched filtering and neural networks in a common framework also allows a clear comparison of their computational/storage complexities and statistical strengths, consequently making deep-learning less of a mystery. 

The rest of the paper is organized as follows. Section II introduces the problem of parametric signal detection as an abstraction of the gravitational-wave detection problem, and discusses the two formulations of the objective. Section III discusses matched filtering as an approach to solving the parametric detection problem, as well as its limitations. Section IV illustrates how neural network models can be applied in this problem, in a way that exactly implements matched filtering at initialization. Section V discusses the training process of neural network models, and in particular how it is aligned with the parametric signal detection problem. In Section VI we present experimental results on real LIGO data and synthetic injections. We discuss some further implications of this work in Section VII, and conclude in Section VIII.

% -------------------------------------------
\section{Parametric Signal Detection}
\label{sec:problem}
% -------------------------------------------

The problem of identifying gravitational waves~\cite{PhysRevD.46.5236} in a single gravitational-wave detector data stream~\footnote{In general, a global Earth and Space based gravitational-wave detector network can be treated as a composite data stream~\citep{1999PhRvD..60b2002O, 2001PhRvD..63j2001F}. However, that added complexity is not necessary when discussing the principles of the paper. Therefore, we constrain ourselves to a single datastream in this proof of principle analysis.} can be formulated as follows: we observe detector strain data $\mb x \in \bb R^n$, and wish to determine whether $\mb x$ consists of astrophysical signal plus noise, or noise alone. We can model possible astrophysical signals as belonging to a parametric family
\begin{equation}
    S_\Gamma = \{ \mb s_{\mb \gamma} \mid \mb \gamma \in \Gamma \}
\end{equation}
where the parameters $\mb \gamma$ can represent properties of the objects that generate the gravitational wave, such as masses, orbits and spins. We assume the signals are normalized to have unit 
power, namely $\|\mb s_{\mb \gamma}\|^2=1$ for all $\mb \gamma$. We model noise as a random vector $\mb z \in \bb R^{n}$, which is assumed to follow distribution $\rho_0$ and be probabilistically independent of the signal. In this notation, our goal becomes one of solving a hypothesis testing problem:
\begin{eqnarray}
    &H_0:& \mb x = \mb z, \\
    \text{or} \quad &H_1:& \mb x = \mb s_{\mb \gamma} + \mb z \ \text{ for some } \ \mb  \gamma \in \Gamma.
\end{eqnarray}
Note that except for special cases, such as when the hypothesis $H_1$ is simple, or when the parameters associated with $H_1$ satisfy certain monotone conditions, we usually do not have a uniformly most powerful test \cite{casella2021statistical}.

Our broad goal is to identify decision rules $\delta: \mathbb{R}^n \to \{ 0, 1 \}$ that (i) have good statistical performance and (ii) can be implemented efficiently. Our approach will start with analytically defined neural networks, which precisely replicate matched filtering, and then train these networks to optimize their statistical performance. We will give training approaches
that are compatible with two classical frameworks for formalizing the performance decision rules $\delta$: the {\em Neyman-Pearson} framework, in which the parameter $\mb \gamma$ is a random vector with known distribution $\nu$, and the {\em minimax} framework, in which we control the worst performance over all possible choices of the parameter $\mb \gamma$. 

\subsection{Neyman-Pearson Framework} 

In this setting, one assumes that $\mb \gamma$ is a random vector with probability distribution $\nu$. With this distribution $\nu$, we can then view $H_1$ as a simple hypothesis. The false positive rate (FPR) associated with the rule $\delta$ is 
\begin{align}
    \fpr = \bb P_{\mb z}\left[ \delta(\mb z) =1  \right] 
\end{align}
The false negative rate (FNR) at signal $\mb s_{\mb \gamma}$ is 
\begin{equation}
    \fnr_{\mb \gamma} = \bb P_{\mb z} \left[ \delta\left( \mb s_{\mb \gamma}+ \mb z \right) = 0 \right].
\end{equation}
The {\em overall} false negative rate is 
\begin{equation}
    \fnr = \int \fnr_{\mb \gamma} \, \mr{d} \nu(\mb \gamma). 
\end{equation}
The Neyman-Pearson criterion seeks the optimal tradeoff between $\fnr$ and $\fpr$: 
\begin{align}
    \min_{\delta} \; \fnr \quad \ST \quad \fpr \le \alpha,
\end{align}
where $\alpha$ is a user-specified significance level. 

There is a classical closed form expression for the optimal test under the Neyman-Pearson criterion: if $\rho_0$ and $\rho_1$ are the probability densities of the signal $\mb x$ under hypotheses $H_0$ and $H_1$, respectively, then the optimal test is given by comparing the {\em likelihood ratio}
\begin{equation}
    \lambda(\mb x) = \frac{\rho_1(\mb x)}{\rho_0(\mb x)}
\end{equation}
to a threshold $\tau$, which depends on the significance level $\alpha$. An illustration of an example problem is shown in FIG \ref{fig:setup}.

\subsection{Minimax Framework} 

When a good prior $\nu$ is not available or cannot be assumed, we can instead seek a decision rule that solves 
\begin{equation} \label{eqn:minimax} 
    \min_{} \; \wfnr \quad \ST \quad \fpr \le \alpha. 
\end{equation}
at a given false positive rate, where $\wfnr$ is the {\em worst false negative rate}  defined as
\begin{equation} 
    \wfnr = \max_{\mb \gamma \in \Gamma} \fnr_{\mb \gamma}.
\end{equation}
In contrast to the Neyman-Pearson criterion, there is in general no simple expression for the minimax optimal rule $\delta$ \cite{yu1992general}. In the next section, we will review matched filtering, a simple, popular approach to detection which is compatible with the minimax framework (albeit suboptimal in terms of \eqref{eqn:minimax}), in the sense that it does not require a prior on $\mb \gamma$.

\begin{figure}
    \centering
    \includegraphics[width=3in, trim={0 .2in 0 .4in}]{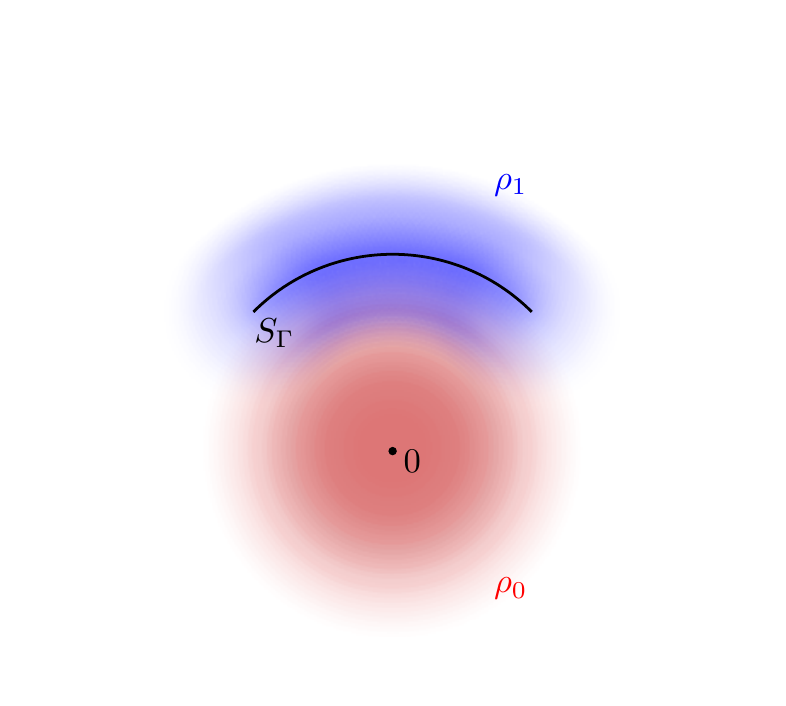}
    \caption{An example of the parametric signal detection problem with signal space $S_\Gamma$. Densities $\rho_0$ and $\rho_1$ are shown in red and blue respectively.}
    \label{fig:setup}
\end{figure}

% -------------------------------------------
\section{Matched Filtering for Parametric Detection}
\label{sec:mf}
% -------------------------------------------

{\em Matched filtering} is a powerful classical approach to signal detection, which applies a linear filter which is chosen to maximize the signal-to-noise ratio (SNR). 

\subsection{Optimality for Single Signal Detection}

In the simplest possible setting, in which (i) there is only one target signal $\mb s$, (ii) the observation $\mb x$ has the same length as $\mb s$, and (iii) the noise is uncorrelated (i.e., $\bb E\left[ \mb z \mb z^* \right] = \sigma^2\mb I$), matched filtering simply computes the inner product between the target $\mb s$ and the observation:
\begin{equation}
    \delta(\mb x)=1 \text{ iff } \left<\mb s, \mb x\right> \ge \tau.
    \label{eqn:mf-single}
\end{equation}
When detecting a single signal $\mb s$ in iid Gaussian noise, this decision rule is optimal in both the Neyman-Pearson and minimax senses: for example, if $\mb z \sim \mc N(\mb 0,\sigma^2\mb I)$, 
the likelihood ratio 
\begin{equation}
    \lambda(\mb x) = \frac{\rho_0(\mb x - \mb s)}{\rho_0(\mb x)} = \exp \left(\frac{\left<\mb s,\mb x\right> - \|\mb s\|^2/2}{\sigma^2} \right)
\end{equation} 
is a monotone function of $\innerprod{\mb s}{\mb x}$, and so matched filtering implements the (optimal) likelihood ratio test. FIG \ref{fig:mf-opt} illustrates this optimality geometrically. 

The simplicity and optimality in this setting make matched filtering a principled choice for signal detection, and have inspired its application in settings that go far beyond the scope of this rigorous guarantee. In particular, the simplest and most practical extension of this rule to detecting parametric families of signals $\mb s_{\mb \gamma}$ is suboptimal in both the Neyman-Pearson and minimax settings. Moreover, there are a number of additional factors which contribute to its suboptimality. These include unknown, non-Gaussian, and possibly time-varying noise distributions as well as density and coverage issues in the template bank, which for complexity reasons may cover only a small portion of the phase space \cite{1994PhRvD..49.2658C}. Nevertheless, we will see how matched filtering can inspire principled approaches to deriving more flexible decision rules which can address many of these challenges. 

\subsection{Extensions to Parametric Detection}

The simplest extension of the decision rule \eqref{eqn:mf-single} to  {\em parametric} detection problems, in which there are multiple potential targets $\mb s_{\mb \gamma}$, involves taking the maximum over the parameter space:
\begin{equation}
    \delta(\mb x)=1 \text{ iff } \max_{\mb \gamma \in \Gamma} \left<\mb s_{\mb\gamma}, \mb x\right> \ge \tau.
    \label{eqn:mf-multi}
\end{equation}
Here we used the assumption that all templates have unit norm, namely $\|\mb s_{\mb \gamma}\|_2^2=1,\ \forall \mb\gamma\in\Gamma$. When this rule \eqref{eqn:mf-multi} is hard to implement in exact form, it can typically be approximated by taking samples $\mb s_{\mb \gamma_1},\dots,\mb s_{\mb \gamma_k}$ and setting
\begin{equation}
    \delta(\mb x)=1 \text{ iff } \max_{i=1,\dots,k} \left<\mb s_{\mb \gamma_i}, \mb x\right> \ge \tau.
    \label{eqn:mf-finite}
\end{equation}
When the sampling is sufficiently dense, the sampled matched filter rule \eqref{eqn:mf-finite} accurately approximates the ideal matched filter rule \eqref{eqn:mf-multi} \cite{1994PhRvD..49.2658C}. This rule, while simple, is an important component of many sophisticated data analysis pipelines, including LIGO, Virgo and KARGA's template based searches for compact binary coalescence signals. 

Note that the matched filtering decision rule \eqref{eqn:mf-multi} has connections to the (generalized) likelihood ratio test, where $H_1$ is the composite hypothesis  $\mb s_{\mb \gamma}\in S_\Gamma$. While this test has nice statistical properties, it is not guaranteed to be the uniformly most powerful test when the hypotheses are composite. For the rest of this paper, the term ``likelihood ratio test'' will be reserved for the test with a given prior and simple hypotheses, which satisfies the Neyman-Pearson criterion.

In contrast to the single signal setting, the simple extensions \eqref{eqn:mf-multi}-\eqref{eqn:mf-finite} of matched filtering to detecting parametric families of signals are not optimal: in the Neyman-Pearson setting, they do not achieve the minimal $\fnr$ for a given $\fpr$, while in the minimax setting, they do not achieve the minimal $\wfnr$ for a given $\fpr$.

The suboptimality of \eqref{eqn:mf-multi}-\eqref{eqn:mf-finite} under Neyman-Pearson can be observed by noting that the decision statistic $\max_{\mb \gamma} \innerprod{\mb s_{\mb \gamma}}{\mb x}$ is not a monotone function of the likelihood ratio, which in iid Gaussian noise for example, takes the form 
\begin{equation}
    \lambda(\mb x) = \int \exp\left( \frac{ \innerprod{\mb s_{\mb \gamma}}{\mb x} - \| \mb s_{\mb \gamma} \|^2 / 2 }{\sigma^2} \right) \mr{d}\nu(\mb \gamma). 
\end{equation}

FIG \ref{fig:mf-subopt} and \ref{fig:mf-subopt-roc} illustrate such suboptimality for a particular problem configuration in $\bb R^2$. Note that throughout our paper, we will slightly abuse the term of
receiver operating characteristic (ROC) curves by plotting $\fnr$ against $\fpr$, instead of the convention of plotting $\fpr$ against the true positive rate $\mr{TPR}\equiv 1-\fnr$. This highlights the connection to the notion of error rates in machine learning, and more importantly facilitates demonstration of the curves and axis ranges at very low error rates.

It is, in a sense, unsurprising that matched filtering is suboptimal in this setting, since the decision rules \eqref{eqn:mf-multi}-\eqref{eqn:mf-finite} do not make use of the prior $\nu$, while the likelihood ratio test assumes (and uses) this prior.

However, the matched filtering rule \eqref{eqn:mf-multi}-\eqref{eqn:mf-finite} is also in general suboptimal in the ``prior-free'' minimax setting. Consider the scenario in FIG \ref{fig:mf-subopt-minimax} as an example, where the signal space $S_\Gamma \subset\R^2$ consists of only two signals $\mb s_1=[1,0]^T$ and $\mb s_2=[0,1]^T$. Comparing the prior-free matched filtering decision rule $\delta_{\text{MF}}$ with the optimal decision rule $\delta_*$ under the Neyman-Pearson framework with uniform prior over the two signals, we see that $\delta_{\text{MF}}$ is suboptimal under Neyman-Pearson criterion with uniform prior. Moreover, from symmetry it follows that for symmetric decision rules such as $\delta_{\text{MF}}$ and $\delta_*$ the worst FNR and the overall FNR are equal. This implies that $\delta_{\text{MF}}$ is also worse than $\delta_*$ under the minimax criterion.

We also note that this suboptimality is, in some sense, not because we don't have sufficient templates. In the example shown in FIG \ref{fig:mf-subopt-minimax}, the matched filtering model already covers the entire signal set which consists of two signals. Furthermore, we will see in the later discussions that matched filtering has other structural limitations when working with non-Gaussian noise distributions.

\begin{figure}
    \centering
    \includegraphics[width=2in, trim={0 .2in 0 .2in}]{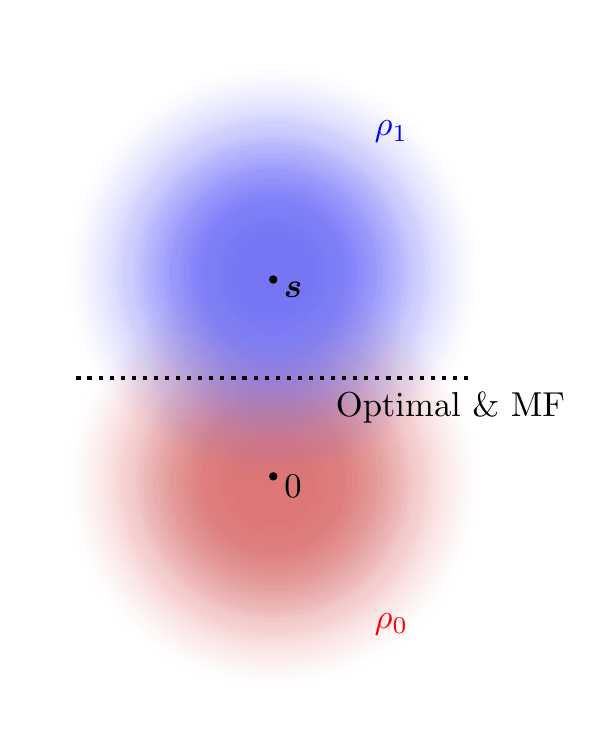}
    \caption{Optimality of matched filtering in single signal detection.}
    \label{fig:mf-opt}
\end{figure}

\begin{figure}
    \centering
    \includegraphics[width=3in, trim={0 .15in 0 .5in}]{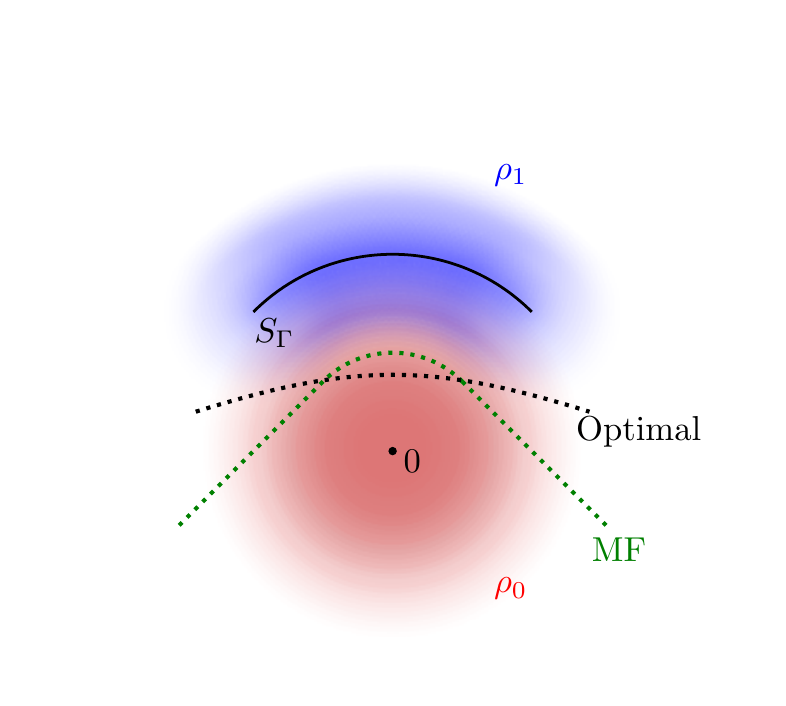}
    \caption{Suboptimality of matched filtering under the Neyman-Pearson framework.}
    \label{fig:mf-subopt}
\end{figure}

\begin{figure}
    \centering
    \includegraphics[width=3in]{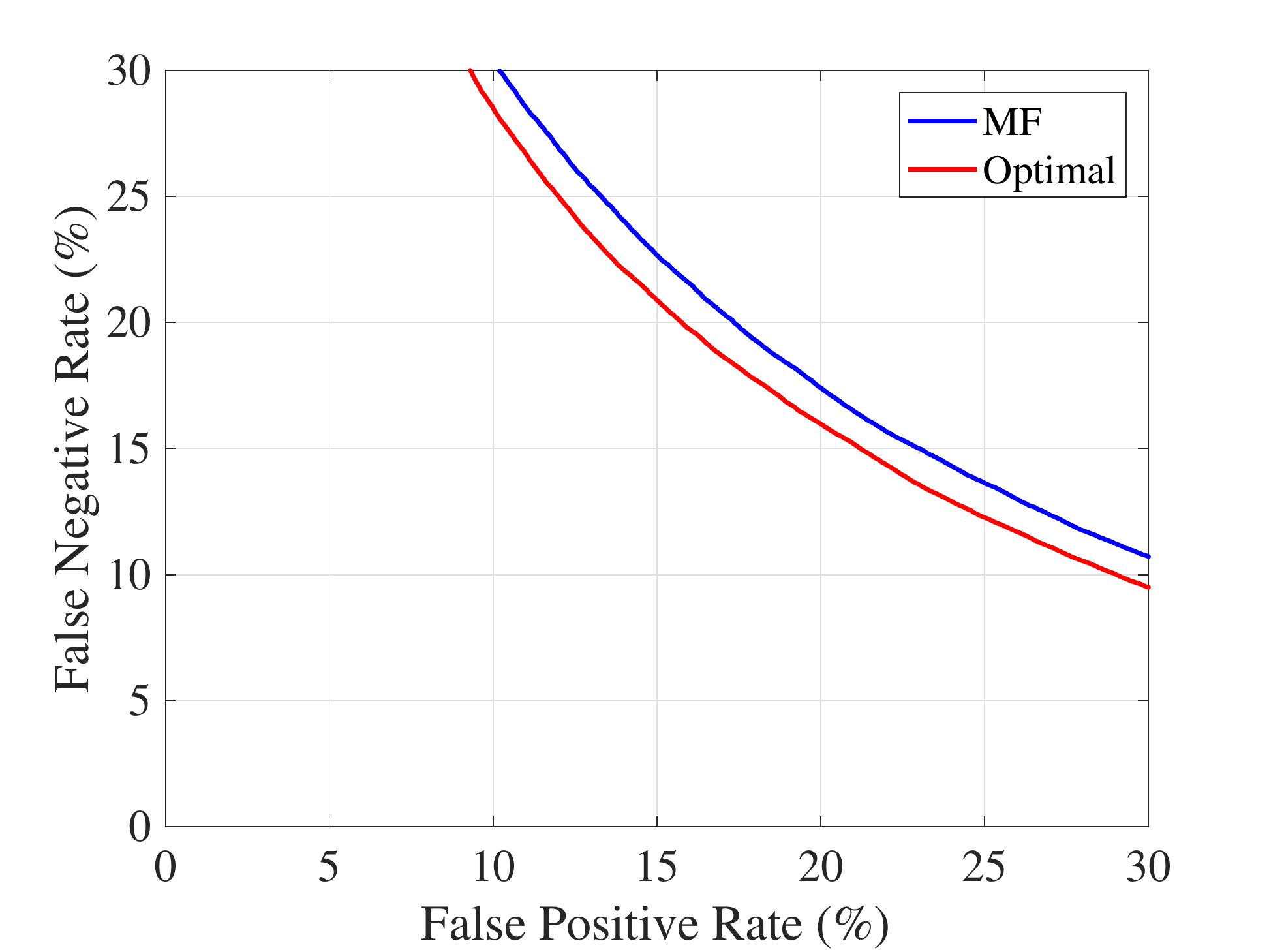}
    \caption{Comparison of ROC curves of the optimal classifier and matched filtering in the 2-dimensional concept as illustrated in FIG \ref{fig:mf-subopt}.}
    \label{fig:mf-subopt-roc} 
\end{figure}

\begin{figure}
    \centering
    \includegraphics[width=2.5in, trim={0 0 0 0}]{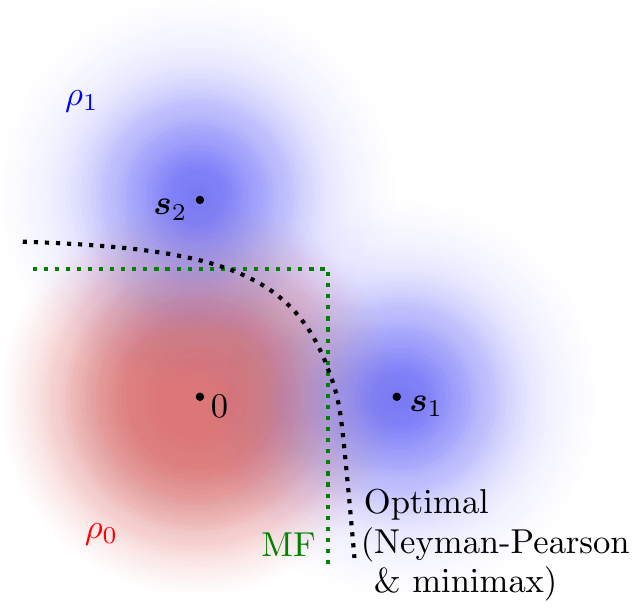}
    \caption{Suboptimality of matched filtering under the minimax framework.}
    \label{fig:mf-subopt-minimax}
\end{figure}

\section{From Matched Filtering to Neural Networks}

Since the matched filtering rule \eqref{eqn:mf-finite} is suboptimal for parametric detection, we will show that (i) the form of this rule suggests approaches to learning optimal rules for parametric detection, and (ii) the resulting classifiers have additional advantages, including greater flexibility and lower computational/storage complexity or cost. Our approach is driven by the observation: the matched filtering rule \eqref{eqn:mf-finite} is equivalent to a feedforward neural network. 

\subsection{Neural Networks: Notation and Basics}

A {\em neural network} implements a mapping from the signal space $\bb R^n$ to an output space $\bb R^d$:
\begin{equation}
    f_{\mb \theta} : \bb R^n \to \bb R^d.
\end{equation}
Here, $\mb \theta$ represents the parameters of the network. Specifically, a fully connected neural network can be written as a composition of layers, each of which applies an affine mapping \begin{equation}
    \mb x \mapsto \mb W \mb x + \mb b
\end{equation}
followed by an elementwise activation function $\phi$: 
\begin{eqnarray}
    f_{\mb \theta}(\mb x) &=& \mb W^L \phi\Bigl( \mb W^{L-1} \phi\Bigl( \dots \phi \Bigl(\mb W^1 \mb x + \mb b^1 \Bigr) \nonumber \\
    && \qquad \qquad \qquad \qquad \dots \Bigr) + \mb b^{L-1} \Bigr) + \mb b^L.
    \label{eqn:nn-general-form}
\end{eqnarray}
With slight abuse of notation, the activation function $\phi:\R\to\R$ acts elementwise when applied to a vector:
\begin{equation}
    \phi([v_1,\dots,v_n]^T) = [\phi(v_1),\dots,\phi(v_n)]^T.
\end{equation}
The intermediate products 
\begin{equation}
    \mb \alpha^\ell( \mb x ) = \phi\Bigl( \mb W^{\ell} \phi\Bigl( \dots \phi \Bigl( \mb W^1 \mb x + \mb b^1 \Bigr) \dots + \mb b^{\ell} \Bigr)\Bigr)
\end{equation}
are sometimes referred to as {\em features} \cite{lecun2015deep}. In many situations, it is useful to ``pool'' features -- this is especially useful for data with spatial or temporal structure; combining spatially adjacent features in a nonlinear fashion renders the decision more stable with respect to deformations of the input \cite{scherer2010evaluation}. 
For example, {\em maximum pooling} takes the maximum of adjacent features. In our notation, we can denote this operation by $\rho^\ell$ and write 
\begin{equation}
    \mb \alpha^{\ell}(\mb x) = \rho^{\ell} \phi\left( \mb W^{\ell} \mb \alpha^{\ell-1}(\mb x) + \mb b^{\ell} \right),
\end{equation}
where the concise notation $\rho^{\ell}$ suppresses certain details about which features are combined. For clarity, we summarize this discussion in the following mathematical definition: 
\begin{definition}[Fully connected neural network] \label{def:fcnn} A fully connected neural network (FCNN) with {\bf feature dimensions} $n^0,\dots,n^L$, {\bf pre-activation dimensions} $m^1,\dots,m^L$, {\bf parameters}
\begin{eqnarray}
    \mb \theta &=& \Bigl( \, \mb W^{L} \in \bb R^{m^L \times n^{L-1}}, \dots, \mb W^1 \in \bb R^{m^1 \times n^0},  \nonumber \\
    && \qquad \mb b^{L} \in \bb R^{m^L}, \dots, \mb b^1 \in \bb R^{m^1} \, \Bigr),
\end{eqnarray}
{\bf activation function} $\phi : \bb R \to \bb R$ (extended to vector inputs by applying it elementwise), 
and {\bf pooling operations} $\rho^\ell : \bb R^{m^{\ell} \to n^{\ell}}$ given by
\begin{equation}
[\rho^\ell]_i(\mb v) = \max_{j \in I_i^\ell} \mb v_j,
\end{equation}
with $I_1^\ell, \dots, I_{n^\ell}^\ell$ being disjoint subsets of $[m^\ell]$,
is a mapping $f_{\mb \theta} : \bb R^n \to \bb R^d$ defined inductively as $f_{\mb \theta}(\mb x) = \alpha^L(\mb x)$ by setting $\mb \alpha^0(\mb x) = \mb x$, and
\begin{equation}
\mb \alpha^\ell(\mb x) = \rho^{\ell}\phi( \mb W^{\ell} \mb \alpha^{\ell-1}(\mb x) + \mb b^{\ell} ), \quad \ell = 1, \dots, L.
\end{equation}
\end{definition}

When discussing neural networks, it is conventional to distinguish between the  {\em network architecture}, which consists of the choices of feature dimensions $n^\ell, m^\ell$, activation function $\phi$, and pooling operators $\rho^\ell$, and the {\em network parameters} $\mb \theta$. Although we have stated a general definition, in specific architectures, the activation function $\phi$ and/or the pooling operators $\rho^\ell$ can be chosen to be trivial ($\phi(t) = t$ and/or $\rho^\ell(\mb v) = \mb v$).

{\bf Architectures.} Neural networks are flexible function approximators \cite{bengio2017deep}: universal approximation theorems indicate that {\em nonlinear}  neural networks (with non-polynomial activation $\phi$) can accurately approximate any continuous function, as long as the network is sufficiently deep and/or wide \cite{cybenko1989approximation,leshno1993multilayer,lu2017expressive}. There is a growing body of empirical and theoretical evidence showing that (relatively small) neural networks can learn relatively smooth functions over low-dimensional submanifolds of $\bb R^{n}$ with a complexity that is proportional to the manifold dimension, which in our problem equals the number of parameters in the parameterization $\mb \gamma \mapsto \mb s_{\mb \gamma}$ \cite{buchanan2020deep}.

Beyond these general considerations, there are  scenarios in which the nature of the task dictates specific architectural choices. For example, in the field of inverse problems, neural network architectures can be generated by interpreting various optimization methods as taking on the structure in Definition \ref{def:fcnn} \cite{gregor2010learning}. Our proposals will have a similar spirit, since they will interpret an existing method (matched filtering) as a particular instance of Definition \ref{def:fcnn}. 

Finally, a major architectural choice is whether to enforce additional structure on the matrices $\mb W^\ell$. When the input $\mb x$ is a time series, it is natural to structure the linear maps $\mb \alpha \mapsto \mb W \mb \alpha$ to be time-invariant, i.e., to be convolution operators. To exhibit the equivalence between matched filtering and neural networks in the simplest possible setting, here we train our networks on injections whose starting time is fixed, and focus on fully connected neural networks (not enforcing convolutional structure). 

In deployment, the input data is a time series, and astrophysical signals can occur at any time. In this setting, the matched filtering rule is applied in a sliding fashion. Similarly, the neural networks proposed here can be also deployed in a sliding fashion, which effectively converts them to particular convolutional networks. Both the equivalence between matched filtering and particular neural networks and the potential advantages of neural networks carry over to this setting. 

{\bf Parameters.} There are various approaches to choosing the network parameters $\mb \theta$. The dominant approach is to learn these parameters by optimization on data: one chooses initial parameters at random (with appropriate variance to ensure stability), and then iteratively adjusts them to best fit a given set of ``training data''. However, it is also possible in some scenarios to either (i) simply choose the weights at random, or (ii) to generate the weights analytically, either by connecting the network architecture to existing structures/algorithms \cite{gregor2010learning} or from harmonic analysis considerations \cite{bruna2013invariant}. There are approaches that lie in between purely data-driven and purely analytical approaches to choosing $\mb \theta$. For example, it is possible generate initial weights analytically, and then tune them on training data. This hybrid approach achieves excellent performance on a number of inverse problems in imaging (super-resolution \cite{wang2015deep}, magnetic resonance image reconstruction \cite{sun2016deep} etc.). 

In the following sections, we will follow this approach: we will give two ways of interpreting the matched filtering decision rule \eqref{eqn:mf-finite} as a fully connected neural network, by making specific (analytical) choices of the architecture and parameters. These analytically chosen parameters can then be used as an initialization for learning on data. We will also see that in addition to this closed-form construction for equivalence, neural network models can be further trained on data to achieve improved performance.

\subsection{Matched Filtering as a Shallow Neural Network}

In the language of the previous section, it is not hard to express the decision statistic \eqref{eqn:mf-finite} of matched filtering as a specific fully connected neural network with one layer ($L=1$). Writing 
\begin{eqnarray}
    \rho^1(\mb z) &=& \max_i z_i,  \\
    \phi(t) &=& t, \\
    \mb W^1 &=& \left[ \begin{array}{c} \mb s_{\mb \gamma_1}^*  \\ \mb s_{\mb \gamma_2}^* \\ \vdots \\ \mb s_{\mb \gamma_k}^* \end{array} \right] \in \bb R^{k \times n}, \label{eqn:shallow-1} \\
    \mb b^1 &=& \mb 0, \label{eqn:shallow-2} 
\end{eqnarray}
(\footnote{We note that the representation is not unique, and can be subject to shift and scale to produce essentially the same decision rule. Specifically, $\mb b^1$ can be identity vector times a constant (including zero) and $\mb W^1$ can be scaled by an arbitrary positive constant. However we choose the form given here for simplicity.})
we have 
\begin{equation} \label{eqn:equiv}
\max_{i} \innerprod{ \mb s_{\mb \gamma_i} }{ \mb x } = \rho^1 \phi \left( \mb W^1 \mb x + \mb b^1 \right). 
\end{equation}
In words, the features produced by this neural network correspond to the correlations of the input with the templates $\mb s_{\mb \gamma_1}, \dots,\mb s_{\mb \gamma_k}$. FIG \ref{fig:shallow-net} illustrates this (simple) architecture, which we label \shallownet. 
\begin{figure}[ht]
    \centering
    \includegraphics[width=3in]{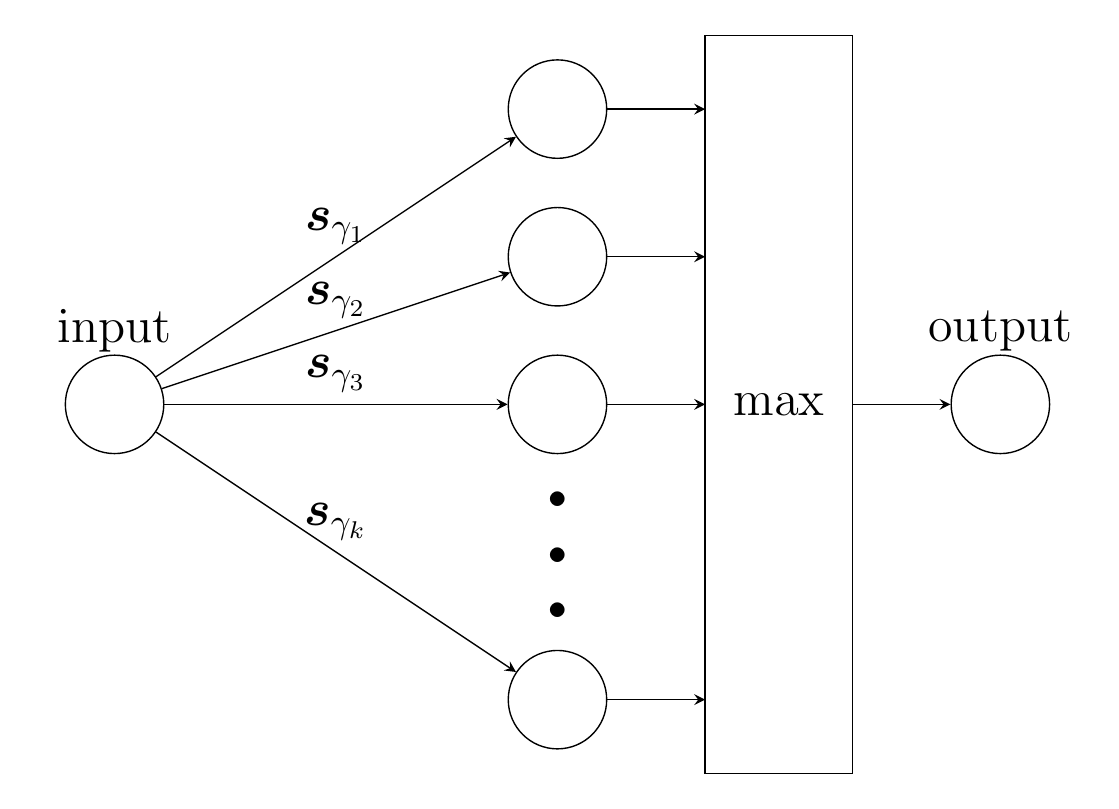}
    \caption{Illustration of \shallownet. Bias terms are omitted in the illustration. (We note that for more complex networks arbitrary pooling operations can replace the ``max'' box.)}
    \label{fig:shallow-net}
\end{figure}

Where needed below, we refer to the input-output relationship implemented by this architecture as  
\begin{equation}
f_{\text{\shallownet},\mb \theta}(\mb x),
\end{equation}
where $\mb \theta = (\mb W^1, \mb b^1)$ represent the weights and biases. When these are chosen as in \eqref{eqn:shallow-1}-\eqref{eqn:shallow-2}, \shallownet\ implements the matched filtering decision rule. We note that these weights can be constructed analytically based on the given templates. 

By learning the weights $\mb W^1$ and biases $\mb b^1$ from examples, we can further adapt this network to implement a more general family of decision rules, beyond matched filtering \eqref{eqn:mf-finite} with templates $\mb s_{\mb \gamma}$. Nevertheless, there are limitations to this architecture. Notice that in \shallownet\ there is only one layer of affine operations,
and so this architecture does not satisfy the dictates of the universal approximation theorem \cite{leshno1993multilayer, kidger2020universal}. 

More geometrically, we can notice that the decision rule associated with \shallownet \ is a maximum of affine functions. This means that for any choice of $\mb W^0$ and $\mb b^0$, the decision boundary is the boundary of a convex set. This property is also true for matched filtering, which shares exactly the same form. An illustration of this property is shown in FIG \ref{fig:convex-boundary}.
\begin{figure}[ht]
    \centering
    \includegraphics[width=2in, trim={0 .1in 0 0}]{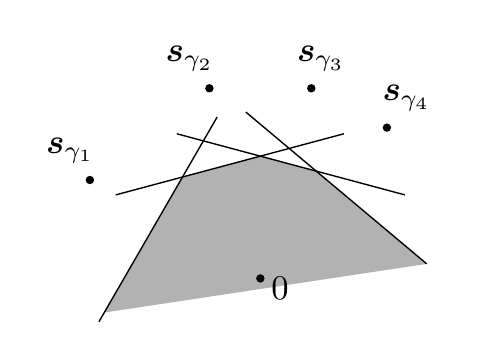}
    \caption{The set of points classified as noise by matched filtering and \shallownet\ is always a convex set.}
    \label{fig:convex-boundary}
\end{figure}

{\em How restrictive is this limitation?} In the context of parametric detection, this depends largely on the noise distribution. If the noise is Gaussian, the optimal decision boundary is itself the boundary of a convex set:

\begin{proposition} \label{prop:convex-boundary} Suppose that the noise $\mb z \sim \mc N(\mb 0, \sigma^2\mb I)$. Then for any significance level $\alpha$, the optimal (Neyman-Pearson) decision region 
\begin{equation}
    \{ \mb x \mid \lambda(\mb x) \le \tau \}
\end{equation}
is a convex subset of $\bb R^n$, where $\tau$ is a constant determined by the significance level $\alpha$.
\end{proposition}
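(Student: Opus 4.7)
The plan is to exhibit $\lambda$ as a convex function of $\mathbf{x}$, since convex functions have convex sublevel sets $\{\mathbf{x} : \lambda(\mathbf{x}) \le \tau\}$, which is exactly the Neyman-Pearson noise region. Because the signal parameter $\mathbf{\gamma}$ is random with prior $\nu$ in the Neyman-Pearson setting, the $H_1$ density is the mixture $\rho_1(\mathbf{x}) = \int \rho_0(\mathbf{x}-\mathbf{s}_{\mathbf{\gamma}})\,\mathrm{d}\nu(\mathbf{\gamma})$, so the first step is simply to write
\begin{equation}
    \lambda(\mathbf{x}) \;=\; \int \frac{\rho_0(\mathbf{x}-\mathbf{s}_{\mathbf{\gamma}})}{\rho_0(\mathbf{x})}\,\mathrm{d}\nu(\mathbf{\gamma}).
\end{equation}

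Next I would specialize to $\mathbf{z}\sim\mathcal{N}(\mathbf{0},\sigma^2\mathbf{I})$. The isotropic Gaussian density conveniently makes the ratio cancel the quadratic $\|\mathbf{x}\|^2$ term, yielding
\begin{equation}
    \frac{\rho_0(\mathbf{x}-\mathbf{s}_{\mathbf{\gamma}})}{\rho_0(\mathbf{x})} \;=\; \exp\!\left( \frac{\langle \mathbf{s}_{\mathbf{\gamma}},\mathbf{x}\rangle - \tfrac{1}{2}\|\mathbf{s}_{\mathbf{\gamma}}\|^2}{\sigma^2} \right),
\end{equation}
which is the exponential of an affine function of $\mathbf{x}$, and in particular convex in $\mathbf{x}$ for each fixed $\mathbf{\gamma}$. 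This is exactly the same expression that already appears in the paper's earlier display for $\lambda$ under Gaussian noise, so no new computation is needed.

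The third step is to push the convexity through the integral. Convexity is preserved under nonnegative linear combinations, and more generally under integration against any positive measure (this follows termwise from the pointwise definition: for any $\mathbf{x},\mathbf{y}$ and $t\in[0,1]$, $f_{\mathbf{\gamma}}(t\mathbf{x}+(1-t)\mathbf{y}) \le t f_{\mathbf{\gamma}}(\mathbf{x})+(1-t)f_{\mathbf{\gamma}}(\mathbf{y})$ for each $\mathbf{\gamma}$, then integrate against $\nu$). Hence $\lambda$ is convex as a function of $\mathbf{x}$. Finally, since any sublevel set of a convex function is convex, $\{\mathbf{x} : \lambda(\mathbf{x}) \le \tau\}$ is convex, with $\tau$ chosen to realize the significance level $\alpha$.

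I do not anticipate a serious obstacle, since every step is standard once the Gaussian cancellation is written down; the only point requiring care is justifying the interchange of convexity with the integral over $\mathbf{\gamma}$, which I would handle either by the pointwise argument above or by noting that a mixture of convex functions is convex (a one-line application of Jensen's inequality applied fiberwise). It is worth flagging that the result depends on \emph{isotropic} Gaussian noise so that the $\|\mathbf{x}\|^2$ term cancels cleanly; for a general covariance the same argument still works after an affine whitening change of coordinates, which sends convex sets to convex sets, but this is a small remark rather than a real difficulty.
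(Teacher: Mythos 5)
Your proposal is correct and follows essentially the same route as the paper's proof: write $\lambda$ as an integral of Gaussian likelihood ratios, observe that each integrand is the exponential of an affine function of $\mb x$ and hence convex, note that convexity survives integration against $\nu$, and conclude via convexity of sublevel sets. The extra care you take in justifying the interchange of convexity with the integral, and the remark about non-isotropic covariance, are welcome but do not change the argument.
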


\begin{proof} 
Please see Appendix. 
\end{proof}

However, for general non-Gaussian distributions, the optimal decision region is often nonconvex. We illustrate this result in FIG \ref{fig:noises}. In fact, this suggests an intrinsic structural limitation of matched filtering and similar architectures. Since in reality the noise distribution is not perfectly Gaussian, we cannot expect the optimal decision region to be convex, and hence the matched filtering structure is unable to approach the performance of the likelihood ratio test with arbitrary precision, even if any number of templates (including ones outside the original signal space) are allowed. In such cases, we can benefit from using a more flexible architecture, which we now introduce.
\begin{figure*}[ht]
    \centering
    \includegraphics[width=7in, trim={0 .1in 0 .1in}]{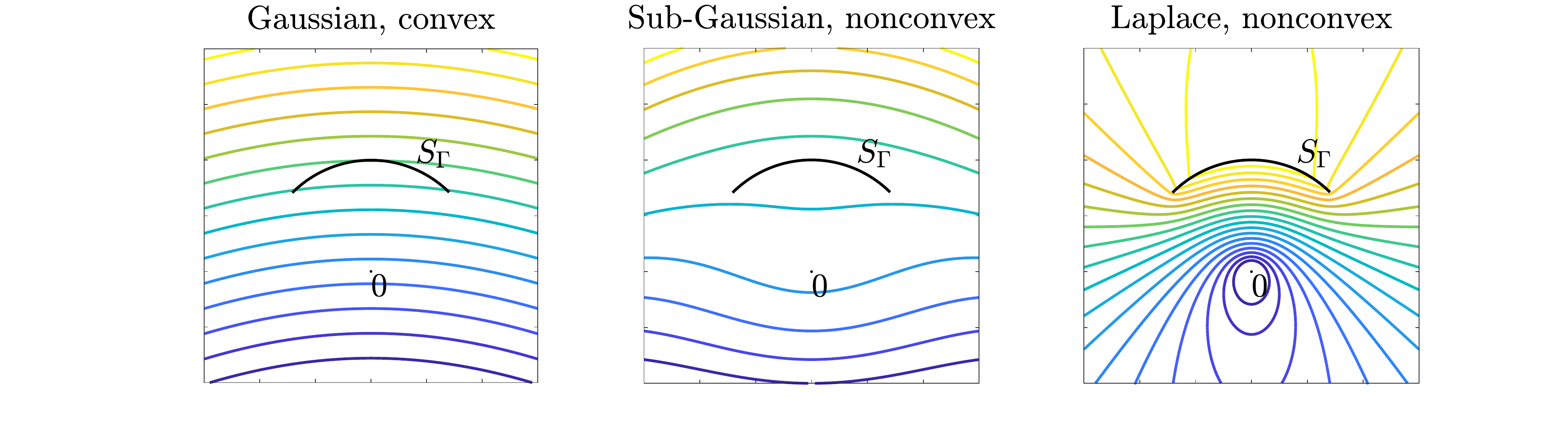}
    \caption{Contours of log likelihood ratio with various noise distributions, and whether the optimal decision regions with $\delta=0$ is always convex. Yellow represents larger values and blue represents lower values. From left to right: (1) Gaussian distribution, convex; (2) Sub-Gaussian distribution $\rhonoise(\mb x)\propto \exp(-C\|\mb x\|^3)$, not necessarily convex; (3) Laplace distribution, not necessarily convex. 
    }
    \label{fig:noises}
\end{figure*}

\subsection{Matched Filtering as a Deep Neural Network}

We describe an alternative way of expressing template matching as a neural network, which leads to deep, nonlinear architectures that are more flexible than \shallownet. We label this structure \deepnet. In this architecture, we do not compute the maximum in a straightforward way using pooling. Instead, we propose an alternative architecture which is more flexible, and can approximate a wider class of functions. In particular, we will no longer be restricted to implementing decision boundaries that are boundaries of convex sets, allowing us to handle scenarios with non-Gaussian noise. An illustration of this \deepnet \ is shown in FIG \ref{fig:deep-net}. 

\begin{figure}[ht]
    \centering
    \includegraphics[width=3.5in]{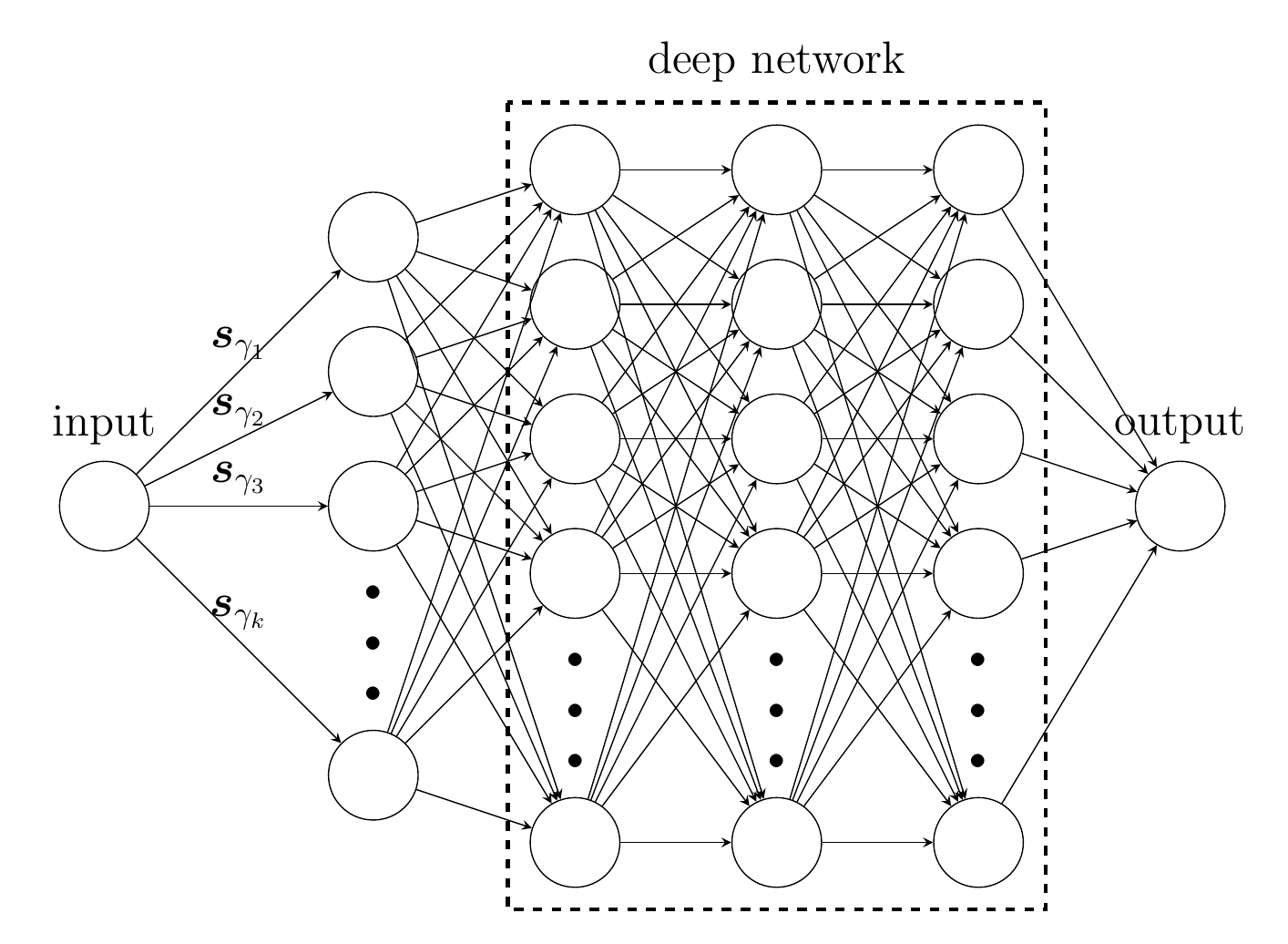}
    \caption{Illustration of \deepnet. Bias terms are omitted in the illustration. This network structure is obtained by replacing the $\max$ module in matched filtering (as in FIG \ref{fig:shallow-net}) with a deep network.}
    \label{fig:deep-net}
\end{figure}

Our construction is based on the rectified linear unit (ReLU) nonlinearity:
\begin{equation}
    \phi(t) = \max( t, 0 ). 
\end{equation}
This is arguably the most commonly used nonlinearity function in modern deep learning. 

The matched filtering decision rule takes the maximum of a family of linear functions $\innerprod{\mb s_{\mb \gamma_i}}{\mb x}$. Instead of simply ``pooling'' these functions as in the previous section, we implement the maximum operation using compositions of ReLUs and linear operations. In particular, observe that the maximum of two numbers can be written as a linear combination of 3 ReLU units: \begin{equation}
    \max(a,b) = b + \phi(a-b) = \phi(b) - \phi(-b) + \phi(a-b).
\end{equation}
The basic idea is to create a hierarchical structure of such 3-ReLU-units, each of which takes a pairwise maximum of its inputs. Our \deepnet\ construction will perform convolutions with the templates $\mb s_{\mb \gamma_i}$, followed by this hierarchical structure for computing the maximum.

\begin{figure}[ht]
    \centering
    \includegraphics[width=3.5in]{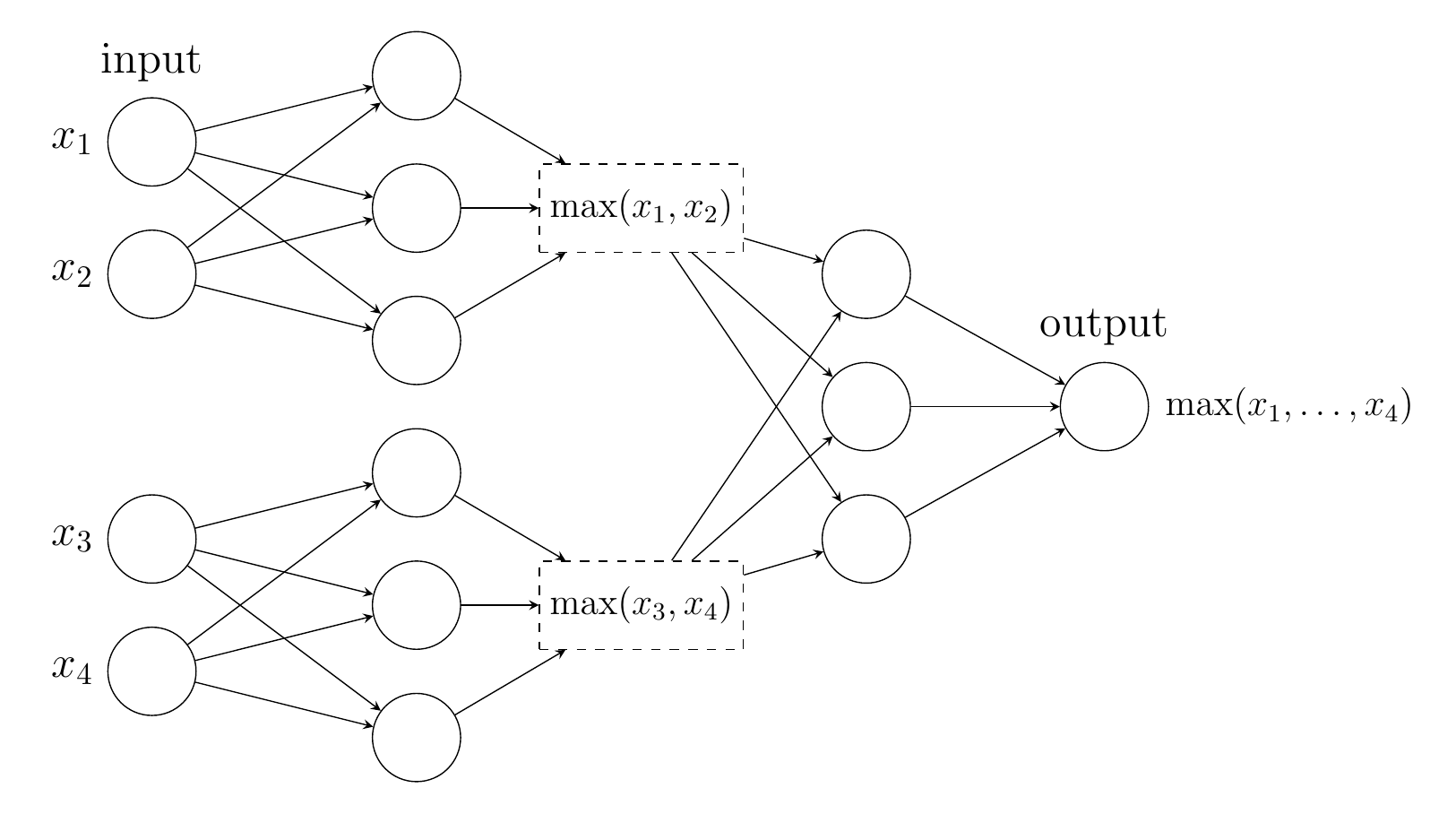}
    \caption{Illustration of implementing max with a ReLU network. The dashed boxes in the middle are not actual nodes in the network, but ``imaginary'' nodes to facilitate construction.}
    \label{fig:pairwise-max}
\end{figure}

FIG \ref{fig:pairwise-max} illustrates this hierarchical structure for the particular example of four inputs. The network in FIG \ref{fig:pairwise-max}
can be expressed as a ReLU network, with sparse weight matrices $\mb W^\ell \ (\ell=0,1,2)$ for the layers respectively: 
\begin{eqnarray}
    \mb W^0 &=& \begin{bmatrix}
        0 & 1 \\
        0 & -1 \\
        1 & -1
    \end{bmatrix}, \quad
    \mb W^2 = \begin{bmatrix}
        1 & -1 & 1
    \end{bmatrix}, \\ 
    \mb W^1 &=& \mb W^0 \otimes \mb W^2 = \begin{bmatrix}
        0 & 0 & 0 & 1 & -1 & 1 \\
        0 & 0 & 0 & -1 & 1 & -1 \\
        1 &  -1 & 1 & -1 & 1 & -1
    \end{bmatrix}.
\end{eqnarray}
Generalizing this construction, we obtain a network that takes the maximum of $k$ numbers, using $\lceil \log_2  k \rceil + 1$ layers. 

While the example above delineates a precise form of the ReLU network, this approach can in fact be made flexible. To ensure that the network output is indeed the maximum of the $k$ inputs, we must ensure that at each layer, each feature participates in at least one of the pairwise max operations. This means that at layer $\ell$, we must have at least $k / 2^\ell$ features. However, we are free to add more intermediate features, with additional (redundant) max operations. This does not change the output of the network, but it affords additional flexibility when we attempt to train the network on data. 
In particular, this allows the construction of arbitrarily wide or deep ReLU networks, and can therefore approximate any regular continuous function \cite{leshno1993multilayer, kidger2020universal}.

There is also a degree of freedom in choosing which features participate in each pairwise maximum operation, which could be chosen in various ways. In our implementation we use the following way to pair up the nodes in layer $l$ for pairwise maximum operations that get to layer $l+1$. Assume layer $l$ contains $2p$ nodes. First pair up the nodes with consecutive indices, namely pair up node $2i-1$ with node $2i$ for $i=1,\dots,p$. This ensures that each node is covered by at least one maximum operation. After that, for each leftover node in layer $l+1$, we establish the corresponding pair in layer $l$ by choosing the nodes at random in layer $l$. In the following, we label this network \deepnet. We emphasize for clarity that the nodes between consecutive layers are fully connected in the neural network; however, the weights not associated with pairwise maximum operations are all initialized to zero. Below, where needed we refer to the decision rule associated with this network as 
\begin{equation}
    f_{\deepnet,\mb \theta}(\mb x),
\end{equation}
where $\mb \theta$ represent the collection of all weights and biases. The above discussion again gives a recipe for choosing these weights analytically such that the decision rule for \deepnet\ coincides with the matched filtering rule.

In contrast to \shallownet, \deepnet \  is a more flexible architecture. In particular, this architecture satisfies the dictates of the universal approximation theorem. Geometrically, it is not restricted to convex decision regions, which makes it capable of achieving optimal decision boundaries even when the noise is heavy-tailed or has other non-ideal properties. 

\subsection{Equivalence of Matched Filtering and Neural Networks}

We have demonstrated by construction the following claim: 
\begin{quote}
{\em Given any collection of templates $\mb s_{\mb \gamma_1}, \dots, \mb s_{\mb \gamma_k}$ (for any $k \ge 1$), one can analytically determine weights $\mb \theta_s$, $\mb \theta_d$ such that 
\begin{eqnarray}
    f_{\text{\shallownet},\mb \theta_s}(\mb x) &=& \max_{i =1 \dots k} \innerprod{ \mb s_{\mb \gamma_i} }{ \mb x }  \\
    f_{\text{\deepnet}, \mb \theta_d}(\mb x) &=& \max_{i =1 \dots k} \innerprod{ \mb s_{\mb \gamma_i} }{ \mb x }  
\end{eqnarray}
for all $\mb x \in \mathbb{R}^n$. }
\end{quote}
We emphasize the complete generality of this claim: it holds for any number and choice of templates. Moreover, it does not depend on training: the networks can be constructed analytically to implement the matched filtering rule. Nevertheless, we will see in the next section that they can be further adapted based on observed data to strictly outperform matched filtering, in terms of the Neyman-Pearson criterion. 

The equivalence between matched filtering and particular neural networks has an additional conceptual advantage: it allows for a clear comparison of the resource complexity of different search methods, in terms of storage and computation. This is valuable because different methods may cut out very different tradeoffs between complexity and accuracy/performance. Neural network implementations of matched filtering can be viewed as ``complexity standard candles'' against which the performance of more sophisticated networks can be measured.
In particular, the complexity of a neural network model may be quantified by the total number of nodes (neurons) in the network, which approximately characterizes the number of elementary operations performed for evaluating an input instance \cite{orponen1994computational, bianchini2014complexity}. We will look for the most appropriate measure of complexity for this problem, and provide detailed analysis in future studies.

\section{Training to Approach Statistical Optimality}

In the previous section, we gave two ways of analytically constructing neural networks that reproduce the matched filtering decision rule, and hence exhibit exactly the same performance as matched filtering. The major advantage of this interpretation of matched filtering is that the resulting model can be further trained on sample data to improve its statistical performance or adapt it to handle non-Gaussian noise distributions, or in other words ``standing on the shoulder of giants''. 
In a typical neural network training problem, we have access to labelled samples 
\begin{equation}
    (\mb x_1, y_1), \dots, (\mb x_N, y_N),
\end{equation}
each of which consists of an observation $\mb x_i \in \bb R^n$ and a corresponding label $y_i \in \{ 0, 1 \}$, which indicates whether $\mb x_i$ contains a noisy signal ($y_i = 1$) or noise only ($y_i = 0$). To date, we have only a moderate number of confirmed gravitational wave detections, and hence have far more negative examples than positive examples. We address this issue by generating our positive training examples by injecting synthetic waveforms into (real) LIGO noise strains. Below, we describe two different training schemes, motivated by the Neyman-Pearson and minimax criteria, which leverage this data to perform training of the neural networks.

{\bf Training for Neyman-Pearson.} In this setting, we assume that the prior $\nu$ is known, and generate positive examples by first sampling $\mb \gamma_i \sim \nu$, and setting $\mb x_i = \mb s_{\mb \gamma_i} + \mb z_i$, where $\mb z_i$ is observed LIGO noise strain. We solve the following optimization problem:
\begin{equation}
    \min_{\mb \theta} \mc R_N( f_{\mb \theta} ) := \frac{1}{N} \sum_{i=1}^N \ell\Bigl( f_{\mb \theta}(\mb x_i), y_i \Bigr). 
    \label{eqn:risk-minimization}
\end{equation}
Here, the {\em loss function} $\ell( \hat{y}, y )$ measures the misfit between the predicted label $\hat{y}$ and the true label $y$. Typical choices include the square loss $(\hat{y} - y)^2$ and the logistic loss
\begin{equation}
y \log(f_{\mathrm{sigmoid}}(\hat{y})) + (1 - y) \log(1 - f_{\mathrm{sigmoid}}(\hat{y})).
\label{eqn:logistic-loss}
\end{equation}
where $f_{\mathrm{sigmoid}}(\cdot)$ denotes the logistic/sigmoid function:
\begin{equation}
    f_{\mathrm{sigmoid}}(x) = \frac{1}{1+\exp(-x)}
\end{equation}

{\em Is this training strategy compatible with the Neyman-Pearson criterion?} The following proposition answers this question in the affirmative. Consider the following setup: training data $(\mb x_i, y_i)$ are generated independently at random, by setting $y_i = 1$ with probability $p \in (0,1)$ and choosing $\mb x_i = \mb s_{\mb \gamma_i} + \mb z_i$ when $y_i = 1$ and $\mb x_i = \mb z_i$ when $\mb y_i = 0$, with $\mb \gamma_i \sim \nu$, and $\mb z_i \sim \rho_{\mr{noise}}$. Let \begin{equation}
    \mc R_\infty ( f ) = \bb E_{(\mb x,y)} \ell( f(\mb x), y ). 
\end{equation}
This represents the large-sample limit of $\mc R_N$: as $N \to \infty$, $\mc R_N(f) \to \mc R_\infty(f)$. 
The following proposition shows that the population risk $\mc R_\infty$ is minimized by (a monotone function of) the likelihood ratio $\lambda$:

\begin{proposition} \label{prop:risk-minimization} Suppose that for any $y = 0,1$, the loss $\ell(\wh{y},y)$ is a strictly convex differentiable function of $\wh{y}$ that is minimized at $\wh{y}=y$. \footnote{In fact it is straightforward to show that the conclusion of Proposition \ref{prop:risk-minimization} holds for more general classes of loss functions, including the logistic loss.} 
Then the unique optimal solution $f_\star$ to the (functional) optimization problem 
\begin{equation} 
\min_f \mc R_\infty( f ) 
\end{equation}
is a strictly increasing function of the likelihood ratio $\lambda$:
\begin{equation}
    f_\star(\mb x ) = g( \lambda(\mb x) ),
\end{equation}
where $g$ is a strictly increasing function that depends on $\ell$.
\end{proposition}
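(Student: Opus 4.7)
My plan is to minimize $\mc R_\infty(f)$ pointwise in $\mb x$, identify the pointwise optimizer via a first-order condition, and then show that this optimizer depends on $\mb x$ only through $\lambda(\mb x)$ via a strictly monotone map. Under the stated data-generating model, I would first write
\begin{equation*}
    \mc R_\infty(f) = \int \bigl[\, p\,\ell(f(\mb x), 1)\,\rho_1(\mb x) + (1-p)\,\ell(f(\mb x), 0)\,\rho_0(\mb x)\,\bigr]\,\mr d\mb x,
\end{equation*}
where $\rho_1$, $\rho_0$ are the conditional densities of $\mb x$ under $H_1$ and $H_0$. Since the integrand depends on $f$ only through the scalar $f(\mb x)$, the functional minimization reduces to minimizing
\begin{equation*}
    F(\hat y; \mb x) \;:=\; p\,\ell(\hat y, 1)\,\rho_1(\mb x) + (1-p)\,\ell(\hat y, 0)\,\rho_0(\mb x)
\end{equation*}
separately at each $\mb x$. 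Strict convexity and differentiability of $\ell(\cdot, y)$ transfer to $F(\cdot; \mb x)$ as a nonnegative linear combination, so there is a unique pointwise minimizer $\hat y^\star(\mb x)$ characterized by its first-order condition.

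Next, I would write the FOC and divide through by $p\,\rho_0(\mb x)$ to obtain
\begin{equation*}
    h(\hat y^\star) \;:=\; \frac{-\ell'(\hat y^\star, 1)}{\ell'(\hat y^\star, 0)} \;=\; \frac{1-p}{p\,\lambda(\mb x)},
\end{equation*}
where $\ell'$ denotes the partial derivative in the first argument. The left-hand side depends only on $\hat y^\star$ and the right-hand side only on $\lambda(\mb x)$, so at this point I already see that $f_\star(\mb x) = g(\lambda(\mb x))$ for some function $g$. It remains to show $g$ is strictly increasing, which amounts to showing $h$ is a strictly monotone bijection from an appropriate range of $\hat y$ onto the range of $\lambda \mapsto (1-p)/(p\lambda)$.

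The monotonicity step is the main obstacle. Since $\ell(\cdot, y)$ is strictly convex and minimized at $\hat y = y$, the derivative $\ell'(\cdot, y)$ is strictly increasing with $\ell'(y, y) = 0$. Hence on $\hat y \in (0, 1)$ we have $-\ell'(\hat y, 1) > 0$ and strictly decreasing, while $\ell'(\hat y, 0) > 0$ and strictly increasing, so $h$ is the ratio of a strictly decreasing positive function over a strictly increasing positive function, hence strictly decreasing. Outside $(0,1)$ the numerator and denominator of $h$ share the same sign, giving $h \le 0$ and contradicting the positivity of the right-hand side, so $\hat y^\star \in (0,1)$; the boundary cases $\lambda = 0$ and $\lambda = \infty$ are handled directly by inspection of $F$, giving $\hat y^\star = 0$ and $\hat y^\star = 1$ respectively. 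Combining these, $\hat y^\star$ is a strictly decreasing function of $(1-p)/(p\lambda)$ and therefore a strictly increasing function of $\lambda$, yielding the required $g$. For the footnote's logistic-loss extension (where the minima of $\ell(\cdot, y)$ are only attained in the limit), the same inversion argument applies with a direct computation $h(\hat y) = e^{-\hat y}$, which is strictly decreasing on all of $\bb R$.
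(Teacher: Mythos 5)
Your proposal is correct and follows essentially the same route as the paper's proof: reduce $\mc R_\infty$ to a pointwise minimization over $\hat y$ at each $\mb x$, derive the first-order condition, and observe that the ratio of derivatives $\ell'(\cdot,0)$ and $-\ell'(\cdot,1)$ is strictly monotone on $(0,1)$ so that the minimizer depends on $\mb x$ only through a strictly increasing function of $\lambda(\mb x)$. Your FOC is just the reciprocal of the paper's, and your handling of the interior-minimizer issue and the logistic-loss footnote matches the paper's remarks.
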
 
\begin{proof} 
Please see Appendix. 
\end{proof} 

This result can be interpreted as saying: {\em ``a sufficiently flexible classifier, trained on a sufficiently large dataset will produce the optimal decision rule.''}  Hence, training to minimize the empirical risk $\mc R_N(f_{\mb \theta})$ is compatible with the Neyman-Pearson criterion.

While this is a promising observation, we should keep in mind a number of remaining issues: How much data is required? What are effective approaches to minimizing the empirical risk $\mc R_N$? In the next section we investigate these questions experimentally.

{\bf Training for Minimax.} 
In this setting, we do not assume any prior, and aim to minimize the worst false negative rate using the formulation in \eqref{eqn:minimax}. We convert the constrained problem \eqref{eqn:minimax} to an equivalent unconstrained problem, 
\begin{equation}
    \min_{\delta} \; \max_{\mb \gamma \in \Gamma} \fnr_{\mb \gamma}   + c\cdot \fpr,
\end{equation}
where $c$ is a constant that depends on $\alpha$.  For tractability, we will fix $c$ at a constant value to obtain a concrete optimization objective, and here we fix $c=1$. In actual deployment where a target significance level $\alpha$ is specified, we can also choose $c$ at the level that corresponds to the specified $\alpha$.
Also, we sample the parameter space $\Gamma$ at points $\{\mb\gamma_i\}_{i=1}^N$. Since $\fpr$ does not depend on $\mb\gamma$, it can be moved inside the maximization. Therefore, the minimax optimization problem can be transformed into
\begin{equation}
    \min_{\delta} \; \max_{i=1,\dots,N} \left( \fnr_{\mb \gamma_i}  + \fpr \right).
\end{equation}

This suggests a natural approach to training under the minimax criterion using first-order optimization methods. At each iteration, we estimate $\fpr$ and $\fnr_{\mb\gamma_i}$ for each $i=1,\dots,N$, and choose $i_*$ with the highest $\fnr_{\mb\gamma_i}$. We then aim to reduce $\fnr_{\mb\gamma_i}+\fpr$, which can be estimated by using a sample dataset $\{(\mb x_i, y_i)\}_{i=1}^N$ as
\begin{equation}
    \frac{1}{N} \sum_{i=1}^N \mb 1 \Big[f_{\mb\theta}(\mb x_i) \neq y_i\Big],
\end{equation}
where in the dataset all $\mb x_i$ with corresponding $y_i=1$ are generated specifically with signal parameter $\mb\gamma_{i_*}$, and half of data pairs in the dataset have $y_i=0$. Finally, it is customary in optimization to replace the non-differentiable 0-1 loss with a smooth loss function $\ell$, and hence we get the following risk minimization objective:
\begin{equation}
    \frac{1}{N} \sum_{i=1}^N \ell \Big(f_{\mb\theta}(\mb x_i), y_i\Big).
\end{equation}
This expression is similar to \eqref{eqn:risk-minimization}, but the difference is that all positive data in the dataset here are associated with signal parameters $\mb\gamma_{i_*}$.

\section{Simulations and Experiments}

\subsection{Data Generation}

Data-driven methods such as neural networks typically require a large amount of data for training. The question of data sufficiency is especially acute in gravitational wave astronomy: we have only a moderate number of confirmed detections to date. We address this issue by generating our positive training examples by injecting synthetic waveforms into LIGO noise strains~\cite{ABBOTT2021100658}, which we elaborate below.

For LIGO noise data, we use the L1 strain from LIGO O2 run between August 1 and August 25, 2017, with ANALYSIS{\textunderscore}READY segments only. The announced confident detections GW170809, GW170814, GW170817, GW170818 and GW170823 are removed from the strain, such that the data is at least $300$ seconds away from these events. We used a total of $338$ frame files each of $4096$ seconds long, namely a total of $384.57$ hours. The strain data is downsampled from the original $4096$Hz to $2048$Hz for processing efficiency. The downsampled L1 strain data is divided into segments of length $0.6$ second, with each successive segment overlapping $50\%$ of the previous segment. 

We generate synthetic gravitational wave signals using 
PyCBC~\citep{PyCBCSoft,2018PhRvD..98b4050N,2016CQGra..33u5004U,2014PhRvD..90h2004D,2005PhRvD..71f2001A,2019PASP..131b4503B,2012PhRvD..85l2006A}, with the following parameters. \textit{Approximant:} IMRPhenomD. \textit{Mass range:} $40$ to $50$ $M_{\bigodot}$, uniformly distributed. \textit{Spin:} $0$. \textit{Sampling rate:} $2048$Hz. \textit{Low frequency cutoff:} $30$Hz. \textit{Coalescence phase:} 0. \textit{Polarization:} plus~\cite{2011gwpa.book.....C}. With this specified mass range, at least $99.5\%$ of the energy of the signal lies in an interval of length $0.3$ second after preprocessing. 
We note that although the templates are not chosen uniformly in actual LIGO deployment~\cite{1996PhRvD..53.6749O,1996PhRvD..53.3033B,1999PhRvD..60b2002O,1998PhRvD..57.2101B,2009PhRvD..79j4017M}, we make this choice here due to simplicity, and also the fact that the large number of templates make up for the possibly suboptimal choice of templates.
 
The above data is used to generate training and test datasets of positive and negative labelled data as follows. We divide the collection of downsampled strain segments randomly into training and test sets, ensuring that no training segment overlaps a test segment. Within the training and test sets, we generate both positive and negative examples. The negative examples contain only the strain data. 
For the positive examples, we inject waveforms into the noise segments by aligning the peak of the waveforms at the $90\%$ location of the center $0.3$s, namely at the location of $0.42$s within the entire segment of $0.6$s. This choice was made as it safely covers the injected waveforms.
The amplitude of the injection is set such that after filtering and whitening (to be described below), the resulting signal-to-noise ratio (SNR) is constant. For the experiment, the size of the training and test datasets are respectively $2.62$ million and $2$ million segments.

We preprocess all training and test data, by applying an FIR bandpass filter with cutoff frequencies $30$Hz and $400$Hz, whitening  using a power spectral density estimated from the L1 strain data, and finally truncating to keep only the center $0.3$ second ($614$ samples).

\subsection{Matched Filtering Configuration}

We first need to determine the necessary number of templates to use in matched filtering, given the space of parameters. We set 10, 100, 1000 and 10000 as the candidate numbers of templates. For each candidate number, we independently repeat the following process 30 times: randomly choose the specified number of pairs of parameters uniformly from $[40,50]\times [40,50]$, generate waveforms according to these parameters, preprocess (bandpass, whiten and truncate) as described above, and then normalize to equal power. This produces the templates for a matched filtering model. We evaluate the model on the test dataset to obtain an ROC curve. For each candidate number of templates and for each value of FPR, we take the lowest FNR outcome among the 30 independent runs. This is used to approximately represent the best performance achievable with a given number of templates. 

The result is shown in FIG \ref{fig:choose-mf}. We see that the best performance of matched filtering in this setting starts to saturate at approximately 1000 templates, and the best performance with 1000 templates is almost identical to the that with 10000 templates. Therefore, we choose the best performance of matched filtering with 10000 templates, namely the bright blue curve, as the performance curve of the matched filtering method in this setting, against which we will be comparing our neural network method.

\begin{figure}[ht]
    \centering
    \includegraphics[width=3in]{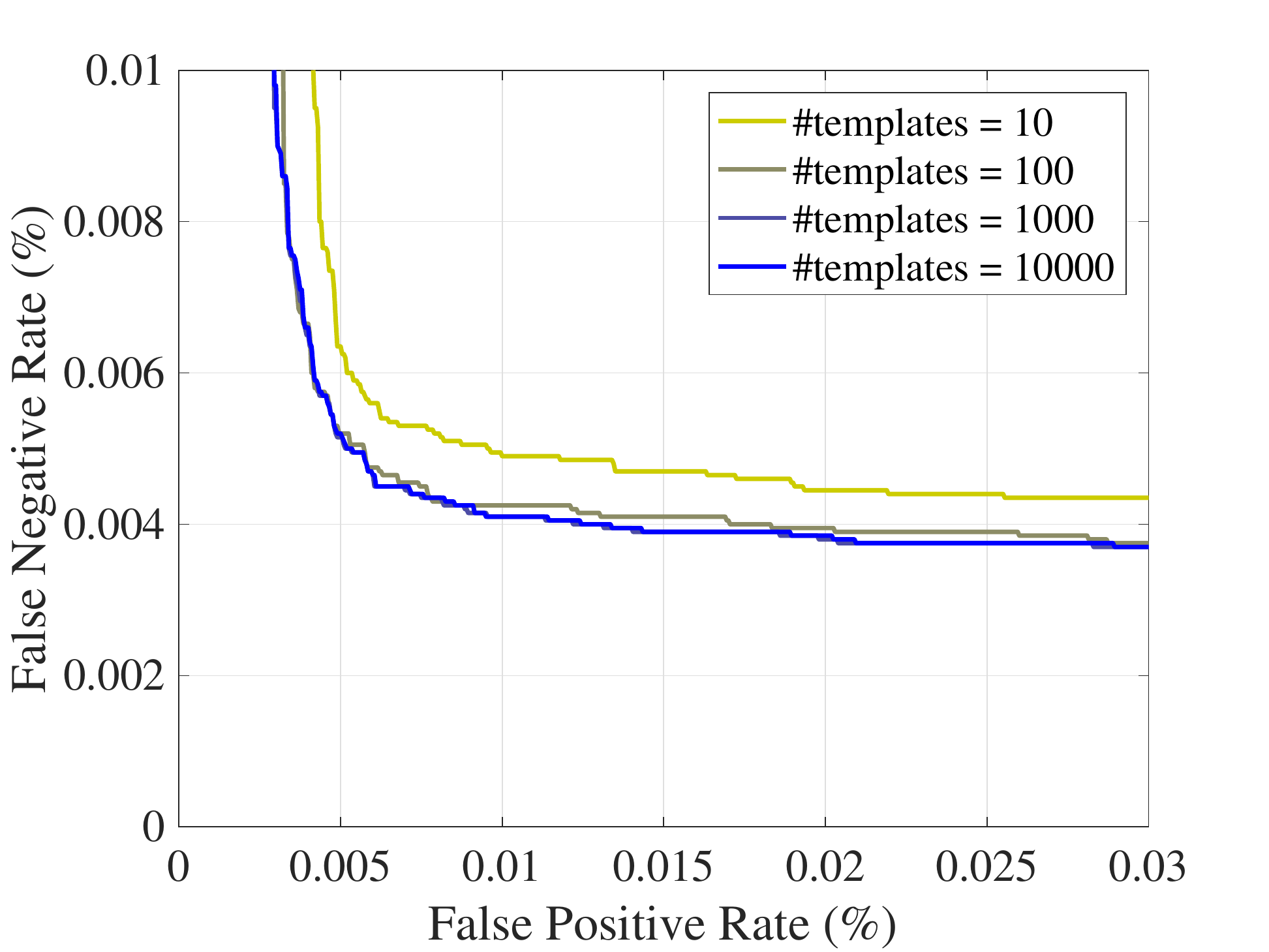}
    \caption{The best performance of matched filtering with given number of templates across 30 independent runs. The performance starts to saturate above 1000 templates.}
    \label{fig:choose-mf}
\end{figure}

\subsection{Neural Network Configuration}

To initialize the templates of the neural network models for both \shallownet\ and \deepnet, we generate $1000$ random waveforms from a uniform distribution over the same parameter range, subject to the same preprocessing and normalization process as done in matched filtering.

For the \deepnet\ architecture, in addition to the 1000 initialized templates, we also need to specify the number of layers and the feature dimension of each layer. In the experiment we choose $L=17$ and \begin{align*}
    (n_1, n_2,\dots,n_L) &= (1000, 1800, 1200, 720, 480, 300, 180,  \\
    &\qquad  120, 90, 60, 36, 24, 18, 12, 6, 3, 1).
\end{align*}
Here these feature dimensions $n_l$ are chosen arbitrarily so long as they satisfy $n_2\ge \frac{3}{2}n_1$, $n_\ell\ge \frac{1}{2}n_{l-1}$ for all $3\le\ell\le L-1$, $n_{L-1}=3$, $n_L=1$, and that $n_2,\dots,n_{L-2}$ are all divisible by 6 (which facilitates construction using our proposed initialization scheme).

For minimax training, in order to search the parameter space for the worst performance, we sample the parameter space $[40,50]\times [40,50]$ of $(m_1,m_2)$ using a square grid sampler with interval 0.5. After discarding equivalent samples due to the symmetry between $m_1$ and $m_2$, there are in total $231$ samples in the parameter space.

For the optimization parameters of the neural network, we train the network using logistic loss, the Adam optimizer \cite{kingma2014adam}, and a constant learning rate of $10^{-5}$.

\subsection{Simulation Results}

{\bf Performance under minimax.} In this experiment we perform injections such that SNR is 5, and only for the \shallownet\ model. While this SNR value is smaller than the range of meaningful observed events, we choose this value for the simplicity of exposition and reduction of training time, since the training procedure for minimax criterion is rather computationally heavy. Similar results should hold at higher SNR values. FIG \ref{fig:better-minimax} plots the ROC curves for both matched filtering and \shallownet\ trained for minimax, measured in terms of both worst performance and the average performance over a uniform prior. We see that the trained neural network achieves better performance than matched filtering under minimax, while achieving approximately identical performance as matched filtering under Neyman-Pearson with a uniform prior. This is not surprising since the training process is designed to only optimize for the minimax criterion, and not the Neyman-Pearson criterion with uniform prior.

\begin{figure}[ht]
    \centering
    \includegraphics[width=3in]{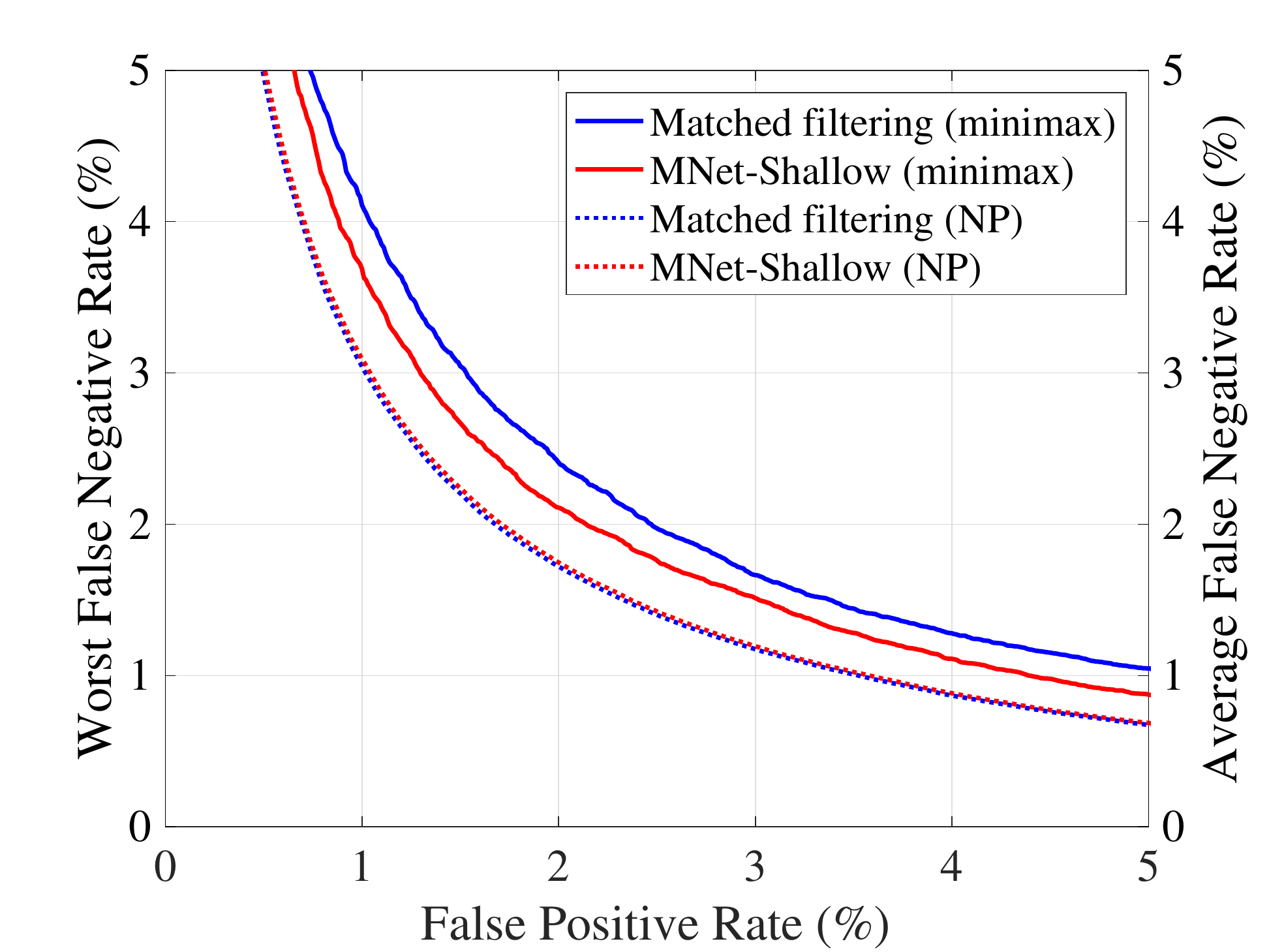}
    \caption{ROC curves of the trained shallow neural network and matched filtering. The solid curves correspond to the vertical axis on the left, and the dotted curves correspond to the vertical axis on the right. For both models we show both the worst (minimax) performance and the average performance under Neyman-Pearson (NP) setting with a uniform prior. The neural network with minimax training outperforms matched filtering in terms of the minimax criterion. The performance of the two models under NP is similar, which is reasonable since our optimization for the neural network was aimed for the minimax criterion only.}
    \label{fig:better-minimax}
\end{figure}

{\bf Performance under a uniform prior.} In this experiment we perform injections such that SNR is 9. Figure \ref{fig:better} plots the ROC curves for both formulations \shallownet\ and \deepnet\ trained for Neyman-Pearson, as well as that of matched filtering. As expected, the neural network models strictly improves over matched filtering. Moreover, the \deepnet\ architecture has a slight performance advantage over \shallownet.
The performance improvement of the trained models over matched filtering is especially remarkable with low FNR values, which is arguably the more important scenario for gravitational wave detection, since we can hardly afford to miss actual astrophysical events which are quite scarce.

\begin{figure*}[t]
    \centering
    \includegraphics[width=7in]{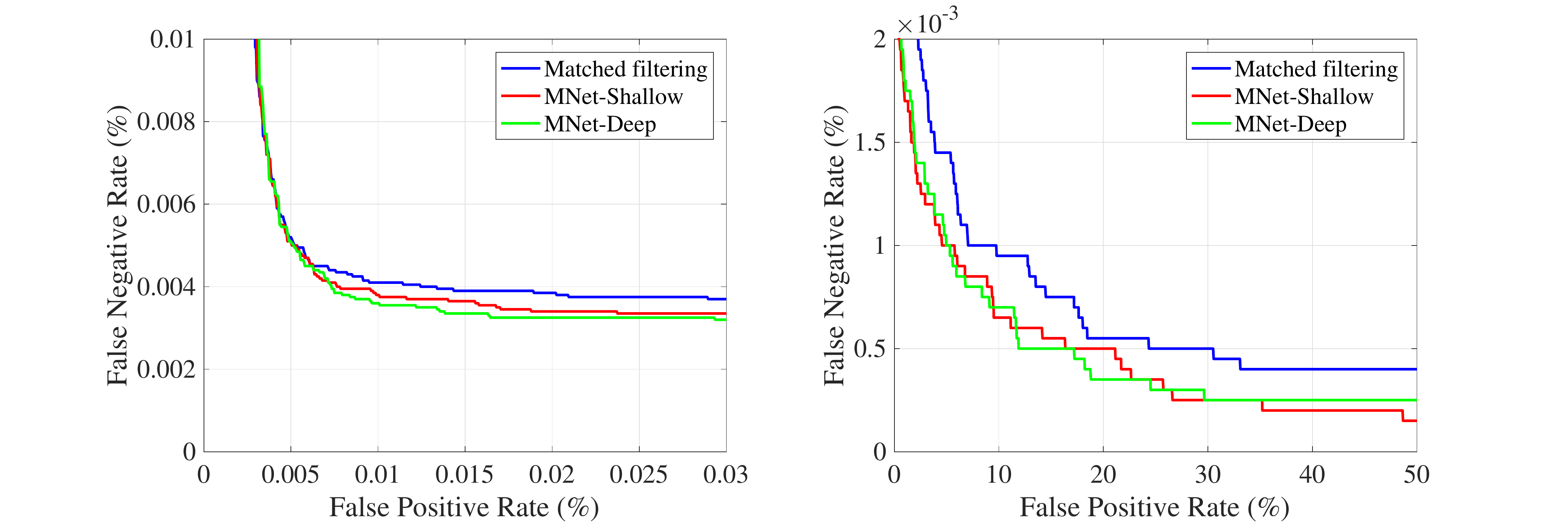}
    \caption{ROC curves of the trained \shallownet\ and \deepnet\ models compared with matched filtering. Left and right panels plot the same curves, but have different axis ranges to better show the contrast between the curves.
    }
    \label{fig:better}
\end{figure*}

\section{Discussion}

Our experiments demonstrate the potential of neural networks to outperform matched filtering, especially at low false negative rates. 
The flexibility of neural networks also enables this architecture to implement more general variations of matched filtering, such as with weights or aggregation functions different from the maximum. Neural networks have additional potential advantages: 
deep networks can adapt to unknown and/or non-Gaussian noise distributions. In addition, architectural ideas in deep networks such as pooling help to convey invariances that may be helpful in detecting some ``unknown unknowns'' that lie outside of the span of a pre-specified family of templates. This should be investigated in the future.

The proposed architectures can be adapted to time-varying noise distributions, by pre-training on very large collections of (synthetic) Gaussian noise and then adapting the pre-trained network using a smaller number of online examples. This kind of pre-training may also be helpful in deploying our methods across larger mass ranges, which require more training data. 

We note that it is, in some sense, unsurprising  that deep networks can exhibit advantages over matched filtering, since the former can be made arbitrarily complex, and can approximate essentially arbitrary functions. An important direction for future work is to study architectures that not only approach optimal statistical performance, but exhibit good {\em complexity-performance} tradeoffs. There are a number of concrete directions for achieving this -- in particular, the weight matrices learned by our Neyman-Pearson networks exhibit particular types of low-dimensional (low-rank and sparse) structure, which can be leveraged to reduce complexity. Interpreting matched filtering as a particular neural network facilitates the study of complexity-performance tradeoffs, since it allows these distinct methods to be studied in a unified framework. Another avenue for complexity reduction is to define and train very large (overparameterized) networks and then prune them to produce much smaller subnetworks with good performance. \deepnet\ is particularly promising in this regard, since this construction yields networks of arbitrary depth. 

One future possibility of the approach is to go beyond the fixed template banks that constrain the limited set of parameters taken into account. For example, to limit the size of the template bank, BH spins that are misaligned from the orbital angular momentum are not widely used yet. Also, due to the lack of available template banks, some astrophysically feasible scenarios receive relatively little attention, including eccentric binary merger template banks where every new template requires a computationally very expensive general-relativity simulation. Therefore generalized matched filtering needs to be investigated in this context, to measure its performance on signal classes that current templates don't cover. Additionally, training it with a sample of eccentric waveforms could enable the detection of other eccentric BBHs even with properties not covered by the limited simulation used for training. Exploring these scenarios are very important experiments for the future.

Another desirable goal is to allow matched filtering algorithms to run "coherently", treating the GW detectors worldwide as a single detector and analyzing data from multiple GW detectors together as a single data stream. The main difficulty is that the sky direction of the cosmic source is unknown, therefore there are many unknown time shifts among the detectors' data. Searching a large number of different combinations can be cost prohibitive with current approaches. It is important to experimentally investigate the ML extensions to matched filtering to measure the increased sensitivity due to the coherent framework.

Furthermore, experiments on the natural generalization of the approach where one does not aim to find the best matching waveform, but instead aims to estimate the parameters of the BBH system are needed. For example, instead of having the maximum reported, one could report the probability distribution over parameters. The difficulty here is that searches usually have much fewer parameters than what is used for parameter estimation. The performance of the ML framework in parameter estimation should be quantified in the future, even if it comes at the price of precision and is therefore only used as a first estimate.

% -------------------------------------------
\section{Conclusion}
\label{sec:conclusion}
% -------------------------------------------

In this paper, we highlighted the idea that matched filtering currently applied by LIGO is formally equivalent to a particular neural network, which can be defined analytically in closed form. We also modeled the LIGO gravitational wave search as the parametric signal detection problem, and illustrated the suboptimality of matched filtering regardless of whether a prior distribution on the parameter space is given. On the other hand, we proposed neural network architectures \shallownet\  and \deepnet, which are initialized to implement matched filtering exactly, and then trained on data for improved performance. In particular, we showed that when the prior distribution is known, the training process is aligned with the statistically optimal decision rule. Between the two proposed architectures, the former more closely resembles the architecture of matched filtering, while the latter has a more flexible architecture capable of dealing with a wider range of distributions. We conducted experiments using LIGO strain data from O2 and synthetic waveform injections, and showed that our trained network can achieve uniformly better performance than matched filtering both with or without a known prior, especially in scenarios where false negative rate is low.

Through this work, we seek to bridge the gap between data-driven methods such as deep learning and those detection methods currently in use in LIGO, and explore the possibility of incorporating them into the gravitational wave search of LIGO, as well as broader areas of scientific discovery. In the future work, we aim to explore the potentials of efficiency gains of neural networks over matched filtering, and also establish an end-to-end guarantee for the performance of the proposed framework.

% -------------------------------------------
% \section*{Acknowledgments}
% -------------------------------------------
\begin{acknowledgments}
We acknowledge computing resources from Columbia University's Shared Research Computing Facility project, which is supported by NIH Research Facility Improvement Grant 1G20RR030893-01, and associated funds from the New York State Empire State Development, Division of Science Technology and Innovation (NYSTAR) Contract C090171, both awarded April 15, 2010.

The authors are grateful for the LIGO Scientific Collaboration for the careful review of the paper. This paper is assigned a LIGO DCC number of LIGO-P2100086. 
The authors acknowledge the LIGO Laboratory and Scientific Collaboration for the detectors, data, and the game changing computing resources (National Science Foundation Grants PHY-0757058 and PHY-0823459). The authors would like to thank colleagues of the LIGO Scientific Collaboration and the Virgo Collaboration and Columbia University for their help and useful comments, in particular the CBC group, Andrew Williamson, Stefan Countryman, William Tse, Nicolas Beltran, Asif Mallik, Sireesh Gururaja, and Thomas Dent which we hereby gratefully acknowledge.
SM thanks David Spergel, Rainer Weiss, Rana Adhikari, and Kipp Canon for the motivating general discussions related to the role of machine learning and data analysis. 

This research has made use of data, software and/or web tools obtained from the Gravitational Wave Open Science Center~\cite{ABBOTT2021100658,GWOSC} (https://www.gw-openscience.org/ ), a service of LIGO Laboratory, the LIGO Scientific Collaboration and the Virgo Collaboration. LIGO Laboratory and Advanced LIGO are funded by the United States National Science Foundation (NSF) as well as the Science and Technology Facilities Council (STFC) of the United Kingdom, the Max-Planck-Society (MPS), and the State of Niedersachsen/Germany for support of the construction of Advanced LIGO and construction and operation of the GEO600 detector. Additional support for Advanced LIGO was provided by the Australian Research Council. Virgo is funded, through the European Gravitational Observatory (EGO), by the French Centre National de Recherche Scientifique (CNRS), the Italian Istituto Nazionale di Fisica Nucleare (INFN) and the Dutch Nikhef, with contributions by institutions from Belgium, Germany, Greece, Hungary, Ireland, Japan, Monaco, Poland, Portugal, Spain.

The authors thank the University of Florida and Columbia University in the City of New York for their generous support.The authors are grateful for the generous support of the National Science Foundation under grant CCF-1740391. The authors thank Sharon Sputz of Columbia University for her effort in facilitating this collaboration. 

I.B. acknowledges the support of the National Science Foundation under grant PHY-1911796 and the Alfred P. Sloan Foundation.

\end{acknowledgments}

% -------------------------------------------
\bibliography{Refs} % Include the name of the corresponding bib file (without .bib) 
% -------------------------------------------

\appendix

\section{Proofs of Key Technical Claims} \label{app:proofs}

\subsection{Proof of Proposition \ref{prop:convex-boundary}}

Combining the definitions of the likelihood ratio $\lambda(\mb x)$ and the probability densities $\rho_0(\mb x)$ and $\rho_1(\mb x)$, we have
\begin{align}
    \lambda(\mb x) &= \frac{\int \rho_{\mr{noise}}(\mb x - \mb s_{\mb \gamma} ) \, \mr{d}\nu(\mb \gamma)}{\rho_{\mr{noise}}(\mb x)} \\
    &= \int \frac{\rho_{\mr{noise}} (\mb x - \mb s_{\mb \gamma} )}{{\rho_{\mr{noise}}(\mb x)}} \, \mr{d}\nu(\mb \gamma).
\end{align}
When the noise is Gaussian $\mathcal{N}(\mb 0, \sigma^2 \mb I)$, the integrand equals
\begin{align}
    \frac{\rho_{\mr{noise}} (\mb x - \mb s_{\mb \gamma} )}{{\rho_{\mr{noise}}(\mb x)}} &= \exp \left(\frac{\left<\mb x,\mb s_{\mb\gamma}\right> - \|\mb s_{\mb\gamma}\|^2/2}{\sigma^2} \right),
\end{align}
which is a convex function of $\mb x$. Hence after integrating over $\mb\gamma$, the resulting function $\lambda(\mb x)$ is still a convex function of $\mb x$. The optimal decision region is a sublevel set of $\lambda(\mb x)$, and is hence a convex set.

\subsection{Proof of Proposition \ref{prop:risk-minimization}}

Assume the training data is drawn iid from some distribution on $(\mb x,y)\in\R^n\times\{0,1\}$. In this setting, the previous defined densities $p_0(\mb x)$ and $p_1(\mb x)$ can be expressed as $p_0(\mb x) = p(\mb x|y=0)$ and $p_1(\mb x) = p(\mb x|y=1)$. If the predictor function is $f:\R^n\to\R$, then the risk is
\begin{align}
    \mathcal{R}(f) &= \E_{(\mb x,y)}[\ell(f(\mb x),y)] \\
    &= \prob[y=0] \cdot \E_{\mb x|y=0}[\ell(f(\mb x),0)] \ + \nonumber\\
    &\qquad \prob[y=1] \cdot \E_{\mb x|y=1}[\ell(f(\mb x),1)] \\
    &= \prob[y=0]  \int_{\R^n} \ell(f(\mb x),0) p_0(\mb x) \mathrm{d}\mb x \ +\nonumber\\
    &\qquad \prob[y=1] \int_{\R^n} \ell(f(\mb x),1) p_1(\mb x) \mathrm{d}\mb x \\
    &= \int_{\R^n} \Big( (1-c) \ell(f(\mb x),0) p_0(\mb x) + \nonumber\\
    &\qquad c \ell(f(\mb x),1) p_1(\mb x) \Big) \mathrm{d}\mb x,
\end{align}
where $c:=\prob[y=1] \in (0,1)$ is an exogenous constant that only depends on the data distribution.
The function that minimizes the above risk is
\begin{equation}
    f_\star(\mb x) = \arg\min_{\hat{y}} \  (1-c) \ell(\hat{y},0) p_0(\mb x) + c \ell(\hat{y},1) p_1(\mb x)
\end{equation}
for all $\mb x\in\R^n$, or equivalently
\begin{equation}
    f_\star(\mb x) = \arg\min_{\hat{y}} \  \ell(\hat{y},0)  + \frac{c \lambda(\mb x)}{1-c} \ell(\hat{y},1) .
    \label{eqn:riskmin-lambda}
\end{equation}
Therefore, the optimal predicted value at a point is the solution to an optimization problem that only depends on the likelihood ratio $\lambda(\mb x)$. 

Take an arbitrary fixed $\mb x$. From the assumption that $\ell(\hat{y},y)$ is strictly convex and minimized at $\hat{y}=y$, it follows that $\ell(\hat{y},0) + \frac{c \lambda(\mb x)}{1-c} \ell(\hat{y},1)$ is strictly convex in $\hat{y}$, strictly decreasing on $(-\infty,0]$ and strictly increasing on $[1,\infty)$. Hence for any $\mb x$ the risk minimization problem of equation \eqref{eqn:riskmin-lambda} has a unique solution in $[0,1]$. The optimal solution can be found from the first-order-condition (FOC). Noticing that $\hat{y}$ cannot be 0 or 1 under the FOC, we can rewrite the FOC as
\begin{equation}
    \frac{\ell'(\hat{y},0)}{-\ell'(\hat{y},1)} = \frac{c \lambda(\mb x)}{1-c}. \label{eqn:foc}
\end{equation}
From the assumption of strong convexity, we know that on the interval $(0,1)$ we have $\ell'(\hat{y},0)>0$ and $\ell'(\hat{y},1)<0$, where in $\ell'$ the derivative is taken with respect to the first argument. Hence the left-hand-side of \eqref{eqn:foc} is strictly increasing in $\hat{y}$.

This concludes that the optimal decision function $f_\star(\mb x)$ is strictly increasing in $\lambda(x)$.

\end{document}